\documentclass[conference]{IEEEtran}
\IEEEoverridecommandlockouts
\usepackage{cite}
\usepackage{amsmath,amssymb,amsfonts}
\usepackage{algorithmic}
\usepackage{graphicx}
\usepackage{textcomp}
\usepackage{bbm}
\usepackage{xcolor}
\usepackage{balance}
\usepackage[english]{babel}
\usepackage{amsthm}
\usepackage{lipsum}

\newcommand\blfootnote[1]{%
  \begingroup
  \renewcommand\thefootnote{}\footnote{#1}%
  \addtocounter{footnote}{-1}%
  \endgroup
}

\newtheorem{thm}{Theorem}
\newtheorem{claim}{Claim}
\newtheorem{lemma}{Lemma}
\newtheorem{corollary}{Corollary}

\theoremstyle{definition}
\newtheorem{definition}{Definition}
\theoremstyle{definition}
\newtheorem{exmp}{Example}[section]
\theoremstyle{remark}

\newcommand{\remove}[1]{}
\def\BibTeX{{\rm B\kern-.05em{\sc i\kern-.025em b}\kern-.08em
    T\kern-.1667em\lower.7ex\hbox{E}\kern-.125emX}}
\begin{document}

\title{Cascaded Group Testing\\
}

\author{\IEEEauthorblockN{Waqar Mirza}
\IEEEauthorblockA{\textit{Department of EE} \\
\textit{IIT Bombay}\\
Mumbai, India \\
\small{wmirza608@gmail.com}}
\and
\IEEEauthorblockN{Nikhil Karamchandani}
\IEEEauthorblockA{\textit{Department of EE} \\
\textit{IIT Bombay}\\
Mumbai, India \\
\small{nikhilk@ee.iitb.ac.in}}
\and
\IEEEauthorblockN{Niranjan Balachandran}
\IEEEauthorblockA{\textit{Department of Mathematics} \\
\textit{IIT Bombay}\\
Mumbai, India \\
\small{niranj@math.iitb.ac.in}}
}

\maketitle

\begin{abstract}
In this paper, we introduce a variation of the group testing problem where each test is specified by an ordered subset of items and returns the first defective item in the specified order or returns null if there are no defectives. We refer to this as \textit{cascaded group testing} and the goal is to identify a small set of $K$ defective items amongst a collection of size $N$, using as few tests as possible for perfect recovery. For the adaptive testing regime, we show that a simple scheme can find all defective items in at most $K$ tests, which is optimal. For the non-adaptive setting, we first come up with a necessary and sufficient condition for any collection of tests to be feasible for recovering all the defectives. Using this, we show that any feasible non-adaptive strategy requires at least $\Omega(K^2)$ tests. In terms of achievability, it is easy to show the existence of a feasible collection of $O(K^2 \log (N/K))$ tests. We show via carefully constructed explicit designs that one can do significantly better for constant $K$. While the cases $K = 1, 2$ are straightforward, the case $K=3$ is already non-trivial and we come up with an iterative design that is asymptotically optimal and requires $\Theta(\log \log N)$ tests. Note that this is in contrast to standard binary group testing, where at least $\Omega(\log N)$ tests are required. For constant $K \ge 3$, our iterative design requires only $\textup{poly}(\log \log N)$ tests. 
\end{abstract}

\section{Introduction}
\blfootnote{N. Karamchandani's work was supported by a SERB grant on `Online Learning with Constraints' and  a SERB MATRICS grant.}
The problem of \textit{group testing} was originally formulated in \cite{dorfman1943detection}, with the motivation being an efficient screening of syphilis-infected soldiers during World War II. A mathematical model for the problem entails identifying a subset of $K$ defective items from a population of size $N$, using as few \textit{pooled tests} as possible. In standard group testing, each pooled test specifies a subset of the items and produces a binary outcome: `negative' if all the items selected in the test are non-defective, and `positive' otherwise. While the original motivation for group testing was medical testing, it has since found application across a wide variety of scenarios including wireless communications \cite{wolf1985born, sharma2014group, inan2019group}, DNA sequencing \cite{hwang2006pooling}, neighbor discovery \cite{engels2021practical, luo2008neighbor} and network tomography \cite{cheraghchi2012graph,mukamoto2015adaptive}. There is a lot of literature related to the group testing problem and we are only able to discuss a small sample of it below; we refer the interested reader to \cite{hwang2006pooling,du2000combinatorial,malyutov2013search,aldridge2019group} for a more comprehensive coverage of the problem as well as known technical results.

Group testing strategies can be divided into two categories: \textit{adaptive} and \textit{non-adaptive}. The former refers to a setting where the configuration of $i$-th pooled test can be decided based on the outcomes of the previous $i-1$ tests. On the other hand, non-adaptive group testing requires all the tests to be specified beforehand so that they can be potentially conducted in parallel. Since the introduction of the group testing problem, there has been a lot of work on characterizing the optimal number of tests required for identifying $K$ defectives out of $N$ items. A simple counting argument shows that any feasible strategy with $T$ tests should satisfy $2^T \ge {N \choose K}$, and thus $T \ge \log_2 {N \choose K}$ provides a lower bound. There exist adaptive group testing strategies whose performance is very close to this bound; for example, a generalized binary splitting algorithm in \cite{hwang1972method} is guaranteed to require at most $\log_2 {N \choose K} + K$ tests. Similarly, both lower bounds and efficient strategies have been proposed for the non-adaptive setting as well. For example, \cite{d1982bounds, furedi1996onr} showed that any non-adaptive group testing strategy requires at least $\min\{N, \Omega\left(K^2\log_K N\right)\}$ tests. On the other hand, explicit testing strategies based on coding-theoretic ideas were proposed in \cite{kautz1964nonrandom} and later in \cite{porat2008explicit}, which require at most $O\left(K^2\log_K^2 N\right)$ and $O\left(K^2\log N\right)$ tests respectively. 
%

Several variants of the binary `OR' pooled test in standard group testing, as described above, have been studied in the literature.  These include \textit{threshold} testing  \cite{cheraghchi2010improved, chen2009nonadaptive} where the outcome is positive only if the number of defective items included in the test is larger than a specified threshold; \textit{quantitative} testing \cite{soleymani2023non} which has a non-binary outcome like the number of defective items included in the test; and \textit{concomitant} testing \cite{bui2023concomitant}  where a test outcome is positive only if it includes at least one item from each of a pre-defined collection of subsets. 

In this work, we study another testing model with non-binary output, which we call  \textit{cascade testing}. Each test is specified by an ordered subset of items $S = (i_1, i_2, ...,i_r)$, and returns the position of the first defective item in $S$ according to the specified order. Note that this test provides more information than the standard binary `OR' test. As motivation for this model, consider a network tomography application where the goal is to identify the congested links using probes. Each probe traverses through a chosen path in the network and either goes through completely, in which case none of the links on the path are congested; or it returns the identity of the first congested link along the path\footnote{For example, Simple Mail Transfer Protocol (SMTP) can provide such information while selecting efficient paths for email delivery}. Another application would be a movie recommendation system which sequentially presents options from different genres to a user, till he/she picks one; and then uses this feedback to learn the subset of genres that the user likes. Such a feedback model has also recently received significant attention in the Online Learning community under the moniker `cascading bandits' \cite{kveton2015cascading, kveton2015combinatorial, gan2020cost}, with applications in opportunistic spectrum access, network routing, recommendation systems, and dynamic treatment allocation.    

The focus of this work is on characterizing the minimum number of tests required to identify $K$ defectives amongst $N$ items, under the cascade testing model. For the adaptive testing regime, we demonstrate a simple scheme that enables us to find all defective items in at most $K$ tests, which is optimal. For the non-adaptive setting, we first come up with a necessary and sufficient condition for any collection of tests (\textit{design}) to be feasible for recovering all the defectives.  Using this equivalence, we show that any feasible non-adaptive strategy requires at least $\Omega(K^2)$ tests. In terms of achievability, it is easy to show that a collection of $O(K^2 \log N)$ randomly constructed tests is feasible. We show via carefully constructed explicit designs that one can do significantly better for constant $K$. While the cases $K = 1, 2$ are straightforward, the case $K=3$ is already non-trivial and we come up with an iterative design that is asymptotically optimal and requires $\Theta(\log \log N)$ tests. Note that this is in contrast to standard binary group testing, where at least $\Omega(\log N)$ tests are required. For constant $K \ge 3$, our iterative design requires only $\textup{poly}(\log \log N)$ tests.

The rest of the paper is organized as follows. We provide the problem formulation for cascaded group testing in Section~\ref{Sec:PF}. The adaptive and non-adaptive settings are considered in Sections~\ref{Sec:Ad} and \ref{Sec:NonAd} respectively. Bounds on the optimal number of tests needed for non-adaptive group testing are presented in Section~\ref{Sec:Opt}. Finally, some explicit constructions are provided in Section~\ref{Sec:Exp} and a short discussion is presented in Section~\ref{Sec:Disc}. 
\section{Problem Formulation}
\label{Sec:PF}
We have a set of $N$ items, denoted by $V = \{v_1,...,v_N\}$, out of which an unknown subset $\mathcal{K}$ are defective. We assume $|\mathcal{K}| \le K$ for some known $0<K\le N$ and aim to recover $\mathcal{K}$ through `tests'. In standard binary group testing \cite{dorfman1943detection}, each test is specified by a subset $S$ of items and returns 1 if $S$ contains a defective item, i.e., $S\cap \mathcal{K}\neq \emptyset$, and returns 0 otherwise.\\
On the other hand, in this work, we consider an alternate testing model 
 we call \textit{cascaded testing} where a test $t$ is associated with an ordered subset of items $(v_{i_1}, v_{i_2},...,v_{i_{|t|}})$, where $|t|$ denotes the number of items involved in the test. The test returns the first defective item in this sequence. In other words, the result of the test is $y =v_{i_r}$ where\footnote{For $n\in \mathbb{N}$, $[n] :=\{1,2,.., n\}$.} $r=$min$\{j\in [|t|]: v_{i_j} \in \mathcal{K}\}$, assuming there is some defective item in the test. If there are no defective items involved, the test returns\footnote{We impose the constraint $0\notin V$ for consistency.} $y = 0$. Note that a cascaded test provides at least as much information as a standard binary group test.

 The main focus of this work is to characterize the minimum number of tests required in the adaptive and non-adaptive settings under cascaded testing. This involves deriving lower bounds as well as designing achievability schemes. 
 
  We introduce some notation for convenience. We say $x\in t$ (resp. $x\notin t$) to mean $x$ is an item involved in (resp. not in) the test $t$.  
  For $x,y\in t$, we say $x<_t y$ if $x$ appears before $y$ in $t$, and $x\le_t y$ if $x$ appears before $y$ or $x=y$. For any set $U$, let $t \cap U$ ($= U \cap t$) denote the ordered set intersection of $t$ and $U$, with the same order of items as in $t$. Similarly, $t\backslash U:=t\cap U^c$. We will sometimes denote the result of the test as $y= \min(t\cap \mathcal{K})$, where the $\min(.)$ of an ordered set is the item that appears first in the order. The $\min(.)$ is taken as 0 for the empty set. 

  \begin{exmp}
      For $N = 6, K = 2$, consider the test $t = (v_3, v_5, v_2, v_1)$. If $\mathcal{K} = \{v_2, v_5\}$, the result will be $y = \min(t\cap \mathcal{K}) = \min((v_5, v_2)) = v_5$.
  \end{exmp}

\section{Adaptive Setting}
\label{Sec:Ad}
In the adaptive setting, we are allowed to design tests sequentially using the results of previous tests to design the next test. In this setting, we can recover the defective set $\mathcal{K}$ using at most $K$ tests via the following simple procedure.

Initialise $\hat{\mathcal{K}} \leftarrow \emptyset$, $i\leftarrow 1$ and run the following loop:
\begin{enumerate}
    \item Run a test which includes the items in $V\backslash \hat{\mathcal{K}}$ in an arbitrary order.
    \item If the test returns 0 (meaning it found no defectives), terminate the procedure and return $\hat{\mathcal{K}}$. 
    \item If the test returns $v$, then update $\hat{\mathcal{K}} \leftarrow \hat{\mathcal{K}} \cup \{v\}$. 
    \item Update $i\leftarrow i+1$. If $i > K$, terminate the procedure and return $\hat{\mathcal{K}}$.
\end{enumerate}

\remove{
\begin{enumerate}
    \item Initialise $\hat{\mathcal{K}}_1 = \emptyset$. Run the following loop starting from $i=1$.
    \item In the $i$th test include the set $[N]\backslash \hat{\mathcal{K}}_i$ of items in the permutation in arbitrary order.
    \item If the test returns $\infty$ (meaning it found no defectives), terminate the procedure and return $\hat{\mathcal{K}} = \hat{\mathcal{K}}_i$. 
    \item If the test labels item $j \in [N] \backslash \hat{\mathcal{K}}_i$ as defective then update $\hat{\mathcal{K}}_{i+1} = \hat{\mathcal{K}}_i \cup \{j\}$. If $i = K$, terminate the procedure and return $\hat{\mathcal{K}} = \hat{\mathcal{K}}_{K+1}$.
\end{enumerate}
}
The above procedure finds a new defective item in each test (except perhaps the last one if $|\mathcal{K}| < K$), and thus finds all defective items in at most $K$ tests. This is also the minimum number of tests needed (in the worst case of $|\mathcal{K}| = K$) to find all defectives, as each test enables us to find at most a single defective item. Thus, the above simple procedure is in fact optimal. 

\section{Non-Adaptive Setting}
\label{Sec:NonAd}
In the non-adaptive setting, we must fix all the tests beforehand. For a given a collection of tests $\mathcal{T} = \{ t_1, t_2,..., t_T\}$ and a (unknown) defective set $\mathcal{K}$, the outputs of the tests are given by $y_i = \min (t_i\cap \mathcal{K})$ for $i=1,2,..,T$, which we will collectively represent by the output vector $y:= [y_1, y_2,..., y_T]$. Let $\mathcal{X}:= \{S \subset V: |S|\le K\}$ denote the collection of possible defective sets $\mathcal{K}$ and $\mathcal{Y}:=(V\cup\{0\})^T$ denote the set of possible output vectors $y$. Finally, let $\Omega_\mathcal{T}: \mathcal{X} \to \mathcal{Y}$ represent the function that maps every possible defective set to its output vector.

For given $N, K$, we will say that a collection of tests $\mathcal{T}$ is a \textit{feasible testing design} if any set of at most $K$ defectives can be identified using the test outputs. Next, we will identify a necessary and sufficient condition for a collection of tests to be feasible. This can be thought of as an analogue of the test matrix \textit{disjunctness} property that plays the same role under standard binary group testing \cite[Chapter 1]{aldridge2019group}.


\remove{
In a non-adaptive setting, we must fix all the tests beforehand. Suppose we design a deterministic set of $T$ tests given by $\mathcal{T} = \{ t_1, t_2,..., t_T\}$. Suppose $\mathcal{K}$ is some defective set with $|\mathcal{K}| \le K$. Then the  outputs of testing are $y_i =$ min$(t_i(\mathcal{K)})$ for $i=1,2,..,T$. We can collect these into a single output vector $y:= [y_1, y_2,..., y_T]$. Let $\mathcal{X}:= \{S \subset [N]: |S|\le K\}$ and $\mathcal{Y}:=([N]\cup\{\infty\})^T$ and define $\Omega_\mathcal{T} : \mathcal{X} \to \mathcal{Y}$ which maps $\mathcal{K} \mapsto y$. For recovery of every defective set $\mathcal{K}$ from its output vector, we require that $\Omega_\mathcal{T}$ be one-one. We will now prove a necessary and sufficient condition on $\mathcal{T}$ for this to hold.
}

\begin{thm}
\label{Thm:Equivalence}
    Given $N, K$, a testing design $\mathcal{T}$ is feasible if and only if it satisfies the following condition:\\
    For all subsets $\mathcal{K}\subset V$ with $|\mathcal{K}| = K$, and for every $v \in \mathcal{K}$, there exists $t\in \mathcal{T}$ such that: 
    \begin{equation} \label{eqn1}
     v = \min(t\cap \mathcal{K}).
    \end{equation}
\end{thm}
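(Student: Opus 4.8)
The plan is to prove the two directions separately, reading the stated condition as exactly the requirement that $\Omega_{\mathcal{T}}$ be injective on $\mathcal{X}$. Both directions hinge on a single \emph{monotonicity} observation that I would isolate first: if $v = \min(t\cap\mathcal{K})$ and $\mathcal{K}'\subseteq\mathcal{K}$ with $v\in\mathcal{K}'$, then deleting items from the defective set cannot move any remaining defective ahead of $v$ in the order of $t$, so $v = \min(t\cap\mathcal{K}')$ as well. Also worth recording up front is the trivial containment that, for any defective set, each nonzero test output lies in that set, i.e. $\{y_i : y_i\neq 0\}\subseteq\mathcal{K}$.

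For \textbf{necessity} I would argue by contraposition. Suppose the condition fails: there is a set $\mathcal{K}$ with $|\mathcal{K}|=K$ and some $v\in\mathcal{K}$ for which no test $t$ satisfies $v=\min(t\cap\mathcal{K})$. I would then compare the two distinct defective sets $\mathcal{K}$ and $\mathcal{K}':=\mathcal{K}\setminus\{v\}$, both of which lie in $\mathcal{X}$ since $|\mathcal{K}'|=K-1\le K$, and show $\Omega_{\mathcal{T}}(\mathcal{K})=\Omega_{\mathcal{T}}(\mathcal{K}')$. Fix a test $t$ and let $w=\min(t\cap\mathcal{K})$; by hypothesis $w\neq v$. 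If $w=0$ then $t$ meets no defective of $\mathcal{K}\supseteq\mathcal{K}'$, so both outputs are $0$; if $w\neq 0$ then $w\in\mathcal{K}'$ (as $w\neq v$) and nothing before $w$ in $t$ lies in $\mathcal{K}$, hence none in $\mathcal{K}'$, so $\min(t\cap\mathcal{K}')=w$ too. The outputs agree on every test, contradicting feasibility.

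For \textbf{sufficiency} I would show that, assuming the condition, any $\mathcal{K}$ with $|\mathcal{K}|\le K$ is recovered simply by reading off the nonzero coordinates of its output vector. The inclusion $\{y_i:y_i\neq 0\}\subseteq\mathcal{K}$ is immediate, so the content is the reverse inclusion, which is where the size-$K$-only hypothesis must be bridged. Given $\mathcal{K}$ of size $k<K$ and any $v\in\mathcal{K}$, I would enlarge $\mathcal{K}$ to a size-$K$ superset $\mathcal{L}\supseteq\mathcal{K}$, apply the hypothesis to obtain a test $t$ with $v=\min(t\cap\mathcal{L})$, and then invoke the monotonicity observation with $\mathcal{K}\subseteq\mathcal{L}$ to conclude $v=\min(t\cap\mathcal{K})$, so $v$ appears as a test output. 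Hence every element of $\mathcal{K}$ appears, $\mathcal{K}$ equals exactly the set of nonzero outputs, and $\Omega_{\mathcal{T}}$ is injective.

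The main obstacle is precisely this gap between the hypothesis, phrased only for sets of size exactly $K$, and what feasibility demands, namely recovery of all sets of size at most $K$; the extension-and-monotonicity step is the crux. Everything else I expect to reduce to routine case analysis on whether a test's first defective equals $v$, is another defective, or is absent.
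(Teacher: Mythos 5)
Your proof is correct, and its core machinery coincides with the paper's: both directions identify feasibility with injectivity of $\Omega_{\mathcal{T}}$, your necessity argument is essentially the paper's (compare $\mathcal{K}$ with $\mathcal{K}\setminus\{v\}$ and check test-by-test that the outputs coincide, splitting on whether the test's first defective is absent or is some $w\neq v$), and your ``monotonicity plus extension to a size-$K$ superset'' step is exactly the mechanism in the paper's converse. The one genuine difference is how injectivity is concluded in the sufficiency direction: the paper takes two distinct sets $S_1\neq S_2$, picks a witness $x_1\in S_1\setminus S_2$, extends $S_1$ to a size-$K$ set, and exhibits a single test on which the two outputs differ; you instead prove the stronger statement that the output vector determines the defective set explicitly, namely $\mathcal{K}=\{y_i : y_i\neq 0\}$, which gives injectivity because the map has an explicit left inverse. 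What your packaging buys is that the paper's Claim~1 (the reconstruction algorithm) falls out of your sufficiency proof for free, rather than being derived afterwards as a corollary of the theorem; what the paper's pairwise-separation packaging buys is a marginally more minimal argument, since it only needs to distinguish outputs rather than fully decode them. Both rest on the identical crux — bridging the size-exactly-$K$ hypothesis to sets of size at most $K$ via superset extension and restriction of the minimum — so I would call this the same proof in two different wrappings, with yours slightly more informative.
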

\begin{proof}[Proof Sketch]
In words, the above condition requires that for any possible collection of defectives $\mathcal{K}$ and any item $v \in \mathcal{K}$, there is at least one test $t$ where $v$ appears before every other item in $\mathcal{K}$ and hence $v$ can be identified as defective by $t$. Formally, the proof follows by noting that $\mathcal{T}$ is a feasible testing design if and only if the map $\Omega_\mathcal{T}$ is a one-to-one function, and then proving the necessity and sufficiency of \eqref{eqn1} for this. The detailed proof can be found in Appendix~\ref{Sec:App1}.
\end{proof}

\remove{
\begin{thm}
    Let $\mathcal{T} = \{ t_1, t_2,..., t_T\}$ be a testing design for given $N,K$. Then $\Omega_\mathcal{T}$ is one-one if and only if $\mathcal{T}$ satisfies the following condition:\\
    For all subsets $\mathcal{K}\subset [N]$ with $|\mathcal{K}| = K$, and for every $v \in \mathcal{K}$, there exists $j\in [T]$ such that: 
    \begin{equation} \label{eqn1}
        t_j(v) < t_j(x)\ \  \forall x \in \mathcal{K}\backslash \{v\} \text{ and } t_j(v) < \infty
    \end{equation}
\end{thm}

\begin{proof}
    First, we show that if $\Omega_\mathcal{T}$ is one-one, then condition \eqref{eqn1} holds. For convenience, we prove this separately for $K \ge 2$ and $K = 1$.
    \begin{enumerate}
        \item $K \ge 2$: In this case, condition \eqref{eqn1} reduces to 
    $$t_j(v) < t_j(x)\ \forall x \in \mathcal{K}\backslash \{v\}.$$
    This follows because $\mathcal{K}\backslash \{v\} \neq \emptyset \Rightarrow \exists x \in \mathcal{K}\backslash \{v\}$. Thus $t_j(v) \neq \infty$ is implied by $t_j(v) < t_j(x)$, since there is no value strictly greater than $\infty$.\\
     Suppose $\Omega_\mathcal{T}$ is one-one, but condition \eqref{eqn1} does not hold. Then $\exists \mathcal{K}_1 \subset [N]$ with $|\mathcal{K}_1| = K$ and $\exists v_1 \in \mathcal{K}_1$, such that $\forall j \in [T]$ there is some $x_j \in \mathcal{K}_1\backslash \{v_1\} \neq \emptyset$ satisfying
    $t_j(v_1) \ge t_j(x_j)$.\\
    Define $\mathcal{K}_2 = \mathcal{K}_1\backslash \{v_1\}$. There exists $\alpha_j \in \mathcal{K}_2$ such that min$(t_j(\mathcal{K}_2)) = t_j(\alpha_j)\le t_j(x_j) \le t_j(v_1)$. So we have $t_j(\alpha_j)\le t_j(x)\ \forall x \in \mathcal{K}_1$, which implies that min$(t_j(\mathcal{K}_1)) = t_j(\alpha_j) =$ min$(t_j(\mathcal{K}_2))\ \forall j \in [T]$.\\
    Thus $\Omega_\mathcal{T}(\mathcal{K}_1) = \Omega_\mathcal{T}(\mathcal{K}_2)$. This contradicts our assumption that $\Omega_\mathcal{T}$ is one-one.
    
    \item $K = 1$: In this case, the condition $t_j(v) < t_j(x)\ \forall x \in \mathcal{K}\backslash \{v\}$ is vacuously true, since $\mathcal{K}\backslash\{v\} = \emptyset$. Thus, condition \eqref{eqn1} reduces to $t_j(v) < \infty$.\\
    If $\Omega_\mathcal{T}$ is one-one, but condition \eqref{eqn1} does not hold, then $\exists \mathcal{K}_1 = \{ v_1\}$ with $t_j(v_1) = \infty\ \forall j \in [T]$. Thus, min$(t_j(\mathcal{K}_1)) = \infty\ \forall j $
    $$\Rightarrow \Omega_\mathcal{T}(\mathcal{K}_1) = [\infty, 
    \infty,...,\infty] = \Omega_\mathcal{T}(\emptyset).$$
    This contradicts our assumption that $\Omega_\mathcal{T}$ is one-one.
    \end{enumerate}

    Now, we will prove the converse, i.e., condition \eqref{eqn1} implies $\Omega_\mathcal{T}$ is one-one. Consider arbitrary $S_1, S_2 \in \mathcal{X}$ with $S_1 \neq S_2$. Assume w.l.o.g. that $S_1\backslash S_2 \neq \emptyset$. So $\exists x_1\in S_1$ such that $x_1 \notin S_2$.\\
    Consider some $\mathcal{K} \supset S_1$ with $|\mathcal{K}_2| = K$. By the condition, we have that $\exists j \in [T]$ such that $t_j(x_1) < t_j(x)\ \forall \ x \in \mathcal{K} \backslash \{x_1\} \supset S_1 \backslash \{x_1\}$. Thus, we have min$(t_j(S_1)) = t_j(x_1)$.\\
    Suppose $S_2 \neq \emptyset$. Then $\exists x_2 \in S_2$ such that min$(t_j(S_2)) = t_j(x_2)$. By properties of tests, we must have $t_j(x_1) \neq t_j(x_2)$ because $x_1 \neq x_2$ and $t_j(x_1) \neq \infty$. Thus, 
    min$(t_j(S_1)) \neq$ min$(t_j(S_2))$.\\
    When $S_2 = \emptyset$, min$(t_j(S_2)) = \infty \neq t_j(x_1) =$ min$(t_j(S_1))$.\\
    Thus, in both cases min$(t_j(S_1)) \neq$ min$(t_j(S_2))$ 
    $$ \Rightarrow \Omega_\mathcal{T}(S_1) \neq \Omega_\mathcal{T}(S_2)$$
    which proves that $\Omega_\mathcal{T}$ is one-one.
\end{proof}
}

\remove{
We also define the matrix representation of the design $\{t_1, t_2,..., t_T\}$ as
$$M = \begin{bmatrix}        
    t_1^T & t_2^T & \hdots & t_T^T
    \end{bmatrix}^T$$
We will refer to this as a testing matrix and use this as a compact representation of testing designs.
}

\subsection{Reconstruction Algorithm}
Given the output of a feasible test design satisfying \eqref{eqn1}, one needs a reconstruction algorithm to estimate the set of defectives. The following claim suggests one such strategy. 

\begin{claim}
 For a feasible testing design $\mathcal{T} = \{t_1,...,t_T\}$ and a (unknown) set of defectives $\mathcal{K}$ with $|\mathcal{K}| \le K$, let $y = [y_1,y_2,...y_T] = \Omega_{\mathcal{T}}(\mathcal{K})$ be the output vector.
Then $\mathcal{K}=\{y_i : i \in [T], y_i \neq 0 \}$.
\end{claim}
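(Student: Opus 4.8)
The plan is to prove the claim by establishing the two set inclusions $\{y_i : i \in [T],\, y_i \neq 0\} \subseteq \mathcal{K}$ and $\mathcal{K} \subseteq \{y_i : i \in [T],\, y_i \neq 0\}$ separately. The first inclusion is immediate from the definition of a cascaded test and needs no appeal to feasibility: whenever $y_i = \min(t_i \cap \mathcal{K}) \neq 0$, the output $y_i$ is by construction the first defective item appearing in $t_i$, so $y_i \in \mathcal{K}$. Hence every nonzero coordinate of the output vector is a genuine defective, and one direction is done.

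For the reverse inclusion I would fix an arbitrary $v \in \mathcal{K}$ and try to exhibit a test whose output equals $v$. The step I expect to be the main obstacle is that the feasibility characterization in Theorem~\ref{Thm:Equivalence} guarantees the defining property only for defective sets of size \emph{exactly} $K$, whereas here we are only told $|\mathcal{K}| \le K$; the condition \eqref{eqn1} cannot be applied to $\mathcal{K}$ directly when $|\mathcal{K}| < K$. To bridge this gap I would extend $\mathcal{K}$ to a set $\mathcal{K}' \supseteq \mathcal{K}$ with $|\mathcal{K}'| = K$ by adjoining arbitrary additional items from $V$, which is always possible since $|\mathcal{K}| \le K \le N = |V|$. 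Applying condition \eqref{eqn1} to the full-size set $\mathcal{K}'$ and to the chosen $v \in \mathcal{K} \subseteq \mathcal{K}'$ then yields a test $t \in \mathcal{T}$ with $v = \min(t \cap \mathcal{K}')$.

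It then remains to transfer this statement back from $\mathcal{K}'$ to the true set $\mathcal{K}$, which relies on a simple monotonicity observation about the ordered intersection. Since $\mathcal{K} \subseteq \mathcal{K}'$, the ordered set $t \cap \mathcal{K}$ is a subsequence of $t \cap \mathcal{K}'$; the equality $v = \min(t \cap \mathcal{K}')$ means $v \in t$ and $v \le_t x$ for every $x \in \mathcal{K}' \cap t$, hence in particular $v \le_t x$ for every $x \in \mathcal{K} \cap t$. Combined with the fact that $v \in \mathcal{K} \cap t$, this forces $v = \min(t \cap \mathcal{K})$, so the coordinate of $y$ indexed by this test equals $v$ and $v$ appears among the nonzero outputs. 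As $v \in \mathcal{K}$ was arbitrary, this gives $\mathcal{K} \subseteq \{y_i : y_i \neq 0\}$ and, together with the first inclusion, the desired equality. The degenerate case $\mathcal{K} = \emptyset$ is handled automatically, since then every output is $0$ and both sides are empty.
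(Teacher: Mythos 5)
Your proof is correct and takes essentially the same route as the paper's: both rest entirely on the feasibility characterization of Theorem~\ref{Thm:Equivalence}. The only substantive difference is one of care, not of method. The paper's proof is a single sentence asserting that Theorem~\ref{Thm:Equivalence} guarantees each defective is detected by some test, silently ignoring that condition \eqref{eqn1} is stated only for sets of size exactly $K$ while the claim allows $|\mathcal{K}| \le K$; you identify this mismatch explicitly and bridge it by extending $\mathcal{K}$ to a size-$K$ superset $\mathcal{K}'$, applying \eqref{eqn1} to $\mathcal{K}'$, and then restricting the resulting test back to $\mathcal{K}$ via the observation that $v \le_t x$ for all $x \in \mathcal{K}' \cap t$ implies $v = \min(t \cap \mathcal{K})$. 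This extension-and-restriction device is precisely what the paper itself uses in the converse direction of the proof of Theorem~\ref{Thm:Equivalence} in Appendix~\ref{Sec:App1}, so your write-up makes rigorous a step the paper leaves implicit; both of your inclusions, as well as the treatment of the degenerate case $\mathcal{K} = \emptyset$, are sound.
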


\begin{proof}
    Follows directly from Theorem \ref{Thm:Equivalence}, which guarantees that each defective $v \in \mathcal{K}$ is detected by at least one test, i.e., $\exists i \in [T]: y_i = v$.
\end{proof}

The above claim provides a natural and simple reconstruction algorithm for cascaded testing. This reconstruction is analogous to the `smallest satisfying set' reconstruction algorithm, which is popular for standard binary group testing \cite[Chapter 2]{aldridge2019group}. 

\section{Optimal testing design}
\label{Sec:Opt}
\begin{definition}[Optimal testing design]
For given $N,K$, a testing design $\mathcal{T}^*$ is said to be optimal if $\mathcal{T}^*$ is feasible and for every feasible design $\mathcal{T}$, $|\mathcal{T}| \ge |\mathcal{T}^*|$ holds. Define:
$$T(N,K):= \text{min}\{|\mathcal{T}|: \mathcal{T} \text{ is feasible} \}$$
\end{definition}
By definition, the size of any optimal design is $T(N, K)$. In the rest of the paper, we derive bounds on the quantity $T(N,K)$. 


To assist with this, we now introduce some notation to write the condition in Theorem~\ref{Thm:Equivalence} more compactly. Let $\mathcal{T}$ be a testing design as before. Given any $S \subset V$ with $S \neq \emptyset$ and some $v \in S$, define:
    \begin{equation}
    \label{Eqn:Ft}
    f_\mathcal{T}(S,v) := |\{t\in \mathcal{T}: v=\text{min}(t \cap S)\}| .
    \end{equation}
This function counts the number of tests in which $v$ appears first among the items in $S$. Using this we can write the condition in Theorem~\ref{Thm:Equivalence} as follows:
\begin{align} \label{eqn2}
    \forall\mathcal{K} \subset [N] \text{ s.t. } |\mathcal{K}| = K,\forall v \in \mathcal{K} : f_\mathcal{T}(\mathcal{K},v) \ge  1 .
\end{align}
We note the following useful property of $f_\mathcal{T}$ for any non-empty $S \subset V$:
\begin{equation} \label{eqn3}
    \sum_{v \in S} f_\mathcal{T}(S,v) \le |\mathcal{T}| .
\end{equation}
This follows from the observation that the sets
$\{t\in \mathcal{T}: v = \text{min}(t\cap S)\}$ are pairwise disjoint for $v\in S$.

We will begin by proving a lower bound on $T(N,K)$, for which we first consider testing designs with a particular `systematic' form. 
%
%
\remove{
To obtain the lower bound, consider an optimal design $\mathcal{T}^*$. From equations \eqref{eqn2} and \eqref{eqn3}, for $S \subset [N]$ with $|S| = K$, we obtain:
$$T(N,K) = |\mathcal{T}^*| \ge \sum_{v\in S} f_\mathcal{T^*}(S,v) \ge \sum_{v \in S} 1 = K$$
}

\subsection{Systematic Form}
\begin{definition}[Systematic form]
    A testing design $\mathcal{T} = \{t_1,...,t_T\}$ is said to be in systematic form if $\forall j\in [T],\ t_j\neq \emptyset$ and min$(t_j) \notin t_k \ \forall k \in [T]\backslash\{j\}$.
\end{definition}
In words, for a testing design in systematic form, the item appearing first in a test is excluded in other tests. Next, we show that to characterize the minimum number of tests required, one can restrict attention to such testing designs.
\begin{thm} \label{thm2}
    Given any feasible testing design $\mathcal{T}_1$, we can construct a feasible testing design $\mathcal{T}_2$ in systematic form satisfying $|\mathcal{T}_2|\le |\mathcal{T}_1|$.
\end{thm}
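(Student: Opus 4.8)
The plan is to use the feasibility characterization from Theorem~\ref{Thm:Equivalence}, in the compact form~\eqref{eqn2}: a design is feasible precisely when for every $\mathcal{K}$ with $|\mathcal{K}|=K$ and every $v\in\mathcal{K}$, some test places $v$ first among the items of $\mathcal{K}$. I would obtain $\mathcal{T}_2$ from $\mathcal{T}_1$ by a sequence of edits that only ever \emph{delete} items from tests (or delete whole tests), so that $|\mathcal{T}_2|\le|\mathcal{T}_1|$ is automatic; the real work is to show each edit preserves~\eqref{eqn2} and that the process terminates in systematic form.

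The core operation is: fix a test $t$, let $w=\min(t)$, and delete $w$ from every \emph{other} test, leaving $t$ untouched. I claim this preserves feasibility. To check it, take any $\mathcal{K}$ with $|\mathcal{K}|=K$, any $v\in\mathcal{K}$, and a test $s$ witnessing $v=\min(s\cap\mathcal{K})$ before the edit. If $w\notin\mathcal{K}$, deleting $w$ leaves every intersection with $\mathcal{K}$ unchanged, so $s$ still witnesses $v$. If $w\in\mathcal{K}$ and $v=w$, then $t$ itself witnesses $v$, since $w=\min(t)\in\mathcal{K}$ forces $w=\min(t\cap\mathcal{K})$ and $t$ is never edited. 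If $w\in\mathcal{K}$ and $v\ne w$, then $v<_s x$ for all $x\in\mathcal{K}\setminus\{v\}$, so $v$ precedes $w$ in $s$, and deleting $w$ from $s$ keeps $v=\min((s\setminus\{w\})\cap\mathcal{K})$. Thus~\eqref{eqn2} survives in every case. The same reasoning shows that deleting a test which has become empty preserves feasibility, since an empty test witnesses nothing.

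With this operation in hand, I would process the tests $t_1,\dots,t_T$ in order: at step $j$, discard $t_j$ if it is empty (reducing the count), and otherwise set $w_j=\min(t_j)$ and delete $w_j$ from all other tests. Two observations make the output systematic. First, once $w_j$ is removed from every other test at step $j$, all later steps only delete further items, so $w_j$ never reappears elsewhere. Second, at a later step $j'>j$ the element $w_j$ has already been removed from $t_{j'}$, so $w_{j'}=\min(t_{j'})\ne w_j$; hence deleting $w_{j'}$ from $t_j$ cannot disturb $\min(t_j)=w_j$, and in particular $t_j$ stays nonempty and keeps its leader. The surviving tests therefore have distinct leaders, each absent from every other test, which is exactly the systematic condition, and no step ever increased the test count.

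I expect the main obstacle to be the feasibility-preservation argument for the core deletion, specifically the bookkeeping in the case $w,v\in\mathcal{K}$ with $v\ne w$, together with confirming that the iteration never undoes an earlier fix, i.e.\ that deleting later leaders cannot alter the leader of an already-processed test. Both points hinge on the single structural fact that the procedure only removes items and never adds them; once that is isolated, the remaining verifications are routine.
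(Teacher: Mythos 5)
Your proposal is correct and takes essentially the same route as the paper: your core deletion operation is exactly the paper's Lemma~\ref{lemma1} (removing $\min(t_l)$ from every other test preserves feasibility, proved by the same three-case analysis on whether $w\in\mathcal{K}$ and whether $v=w$), and your sequential processing with the two observations---that a processed leader never reappears in other tests, and that later deletions cannot disturb an earlier test's leader---is precisely the paper's procedure and its termination/systematic-form argument. No gaps to report.
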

The proof of the above result can be found in Appendix~\ref{Proof:Thm2}. We present an example to illustrate this result.
\begin{exmp}
    For $N = 4, K = 3$, consider the feasible design $\mathcal{T}_1 = \{(v_1, v_2, v_3, v_4), (v_3, v_2, v_4,v_1),(v_4, v_2, v_3, v_1) \}$. This can be reduced to the systematic form design $\mathcal{T}_2 = \{(v_1, v_2), (v_3, v_2), (v_4, v_2)\}$, which is also feasible.
\end{exmp}
The corollary below immediately follows from Theorem \ref{thm2}.
\begin{corollary} \label{cly1}
For given $N,K$, let $\mathcal{G}_{N,K}$ be the set of feasible testing designs in systematic form. Then
    $$T(N,K) = \text{min}\{|\mathcal{T}|: \mathcal{T} \in \mathcal{G}_{N,K} \} .$$
\end{corollary}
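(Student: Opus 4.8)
The plan is to prove the equality by establishing the two matching inequalities between $T(N,K)$ and the quantity $m := \min\{|\mathcal{T}| : \mathcal{T} \in \mathcal{G}_{N,K}\}$, with all the substantive work delegated to Theorem~\ref{thm2}.

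First I would handle the easy direction. Every design in $\mathcal{G}_{N,K}$ is, by definition, a feasible design, so $\mathcal{G}_{N,K}$ is a subset of the collection of all feasible designs. Minimizing $|\mathcal{T}|$ over a smaller collection can only produce a value that is at least as large, so $m \ge T(N,K)$ follows immediately from the definition $T(N,K) = \min\{|\mathcal{T}| : \mathcal{T} \text{ is feasible}\}$.

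For the reverse inequality I would start from an optimal feasible design $\mathcal{T}^*$ with $|\mathcal{T}^*| = T(N,K)$; such a design exists because there is at least one feasible design and the sizes are nonnegative integers, so the minimum is attained. Applying Theorem~\ref{thm2} to $\mathcal{T}^*$ yields a feasible design $\mathcal{T}_2$ in systematic form with $|\mathcal{T}_2| \le |\mathcal{T}^*|$. Since $\mathcal{T}_2 \in \mathcal{G}_{N,K}$, we get $m \le |\mathcal{T}_2| \le |\mathcal{T}^*| = T(N,K)$. Combining this with the previous paragraph gives $m = T(N,K)$, which is the claim.

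Since the genuine content is entirely contained in the construction behind Theorem~\ref{thm2}, there is no real obstacle at this stage; the statement is a one-line consequence once that theorem is available. The only point worth a brief remark is that both minima are attained, which holds because each is a minimum of nonnegative integers over a nonempty set, so the argument involves no limiting or attainment subtleties.
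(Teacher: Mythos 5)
Your proof is correct and matches the paper's intent exactly: the paper states the corollary ``immediately follows from Theorem~\ref{thm2},'' and your two-inequality argument (the trivial direction from $\mathcal{G}_{N,K}$ being a subset of all feasible designs, and the reverse direction by applying Theorem~\ref{thm2} to an optimal design) is precisely that one-line consequence, spelled out. Your remark that both minima are attained over nonempty sets of nonnegative integers is a harmless and valid addition.
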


We now present an equivalent condition for the feasibility of systematic form designs.
\begin{thm} \label{thm3}
    Let $\mathcal{T}$ be a testing design in systematic form  and let $L := \{\text{min}(t): t\in \mathcal{T}\}$. Then $\mathcal{T}$ is a feasible testing design if and only if the following holds
    \begin{multline} \label{eqn7}
        \forall S \subset V\backslash L \text{ s.t. } 1\le|S| \le K,\\
        \forall v \in S : f_\mathcal{T}(S,v) \ge  K+1-|S|
    \end{multline} 
\end{thm}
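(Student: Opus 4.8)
The plan is to exploit the defining property of systematic form: each leader $\ell \in L$ equals $\min(t)$ for exactly one test $t \in \mathcal{T}$ (call it $t_\ell$), appears in no other test, and is the very first item of $t_\ell$. In particular the leaders are distinct, so $|L| = |\mathcal{T}|$. Two consequences drive the whole argument. First, whenever $\ell \in \mathcal{K}$ we automatically have $\ell = \min(t_\ell \cap \mathcal{K})$, so every defective lying in $L$ is detected by its own test and the feasibility condition \eqref{eqn2} can only fail at defectives in $\mathcal{K}\setminus L$. Second, since $\ell$ occurs only in $t_\ell$ and first there, adjoining $\ell$ to a candidate defective set can alter the outcome of $t_\ell$ alone, and can only ever \emph{block} a non-leader from being the first defective.

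First I would record the following counting relation. Fix $\mathcal{K}$ with $|\mathcal{K}| = K$, put $S := \mathcal{K}\setminus L$, and take any $v \in S$. Since $S \subseteq \mathcal{K}$, any test with $v = \min(t\cap \mathcal{K})$ also has $v = \min(t\cap S)$, so the tests detecting $v$ within $\mathcal{K}$ form a subset of the $f_\mathcal{T}(S,v)$ tests detecting $v$ within $S$. A test in the latter family fails to detect $v$ within $\mathcal{K}$ only if some $\ell \in \mathcal{K}\cap L$ precedes $v$, which by the structural remark happens precisely when $t = t_\ell$; as each such $\ell$ rules out only its single test $t_\ell$, we obtain
\begin{equation*}
f_\mathcal{T}(S,v) - |\mathcal{K}\cap L| \;\le\; f_\mathcal{T}(\mathcal{K},v) \;\le\; f_\mathcal{T}(S,v), \qquad |\mathcal{K}\cap L| = K - |S|.
\end{equation*}

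For the \emph{if} direction, assume \eqref{eqn7}. Given $\mathcal{K}$ of size $K$ and $v \in \mathcal{K}$: if $v \in L$ then $v$ is detected by $t_v$; otherwise $v \in S := \mathcal{K}\setminus L$, and the left inequality above together with \eqref{eqn7} gives $f_\mathcal{T}(\mathcal{K},v) \ge (K+1-|S|) - (K-|S|) = 1$. Hence \eqref{eqn2} holds and $\mathcal{T}$ is feasible. For the \emph{only if} direction I would argue by contradiction: suppose $\mathcal{T}$ is feasible but $f_\mathcal{T}(S,v) \le K-|S|$ for some admissible $S \subseteq V\setminus L$ and $v \in S$. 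The leaders of the $f_\mathcal{T}(S,v)$ tests detecting $v$ within $S$ are distinct elements of $L$; I would form $\mathcal{K}$ by adjoining all of them to $S$ and padding with further leaders from $L$ up to size exactly $K$. Adjoining the leader of each detecting test blocks $v$ in that test, while each padding leader sits in a test that never detected $v$ and so cannot create a new detection. Thus $f_\mathcal{T}(\mathcal{K},v) = 0$, contradicting feasibility, and therefore $f_\mathcal{T}(S,v) \ge K+1-|S|$.

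The one point needing care — and the main obstacle — is the feasibility of this last construction: I must enlarge $S$ to a set of size exactly $K$ using only leaders, which requires $|L| \ge K - |S|$. This follows because $|L| = |\mathcal{T}|$ in systematic form, and summing \eqref{eqn2} via \eqref{eqn3} over any size-$K$ set shows that a feasible design has $|\mathcal{T}| \ge K > K - |S|$ (as $|S|\ge 1$). With the budget secured and the two directions established, the equivalence follows.
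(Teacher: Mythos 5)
Your proof is correct and follows essentially the same route as the paper's: the same leader/non-leader decomposition, the same blocking construction for necessity (adjoin the leaders of the tests detecting $v$ within $S$ and pad $\mathcal{K}$ up to size $K$), and the same counting/pigeonhole argument for sufficiency, which you merely repackage as the sandwich inequality $f_\mathcal{T}(S,v) - |\mathcal{K}\cap L| \le f_\mathcal{T}(\mathcal{K},v) \le f_\mathcal{T}(S,v)$. The only minor difference is that the paper pads with arbitrary items disjoint from $S$ rather than with leaders, which sidesteps the budget check $|L| \ge K - |S|$ that your construction requires (and which you correctly resolve by noting any feasible design has $|\mathcal{T}| \ge K$).
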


\begin{proof}[Proof Sketch]
Since $\mathcal{T}$ is in systematic form, the first item in each test is distinct, meaning $L$ has exactly $|\mathcal{T}|$ items. To prove the forward implication, we first assume $\mathcal{T}$ is feasible. If \eqref{eqn7} doesn't hold, then there is some set $S \subset V\backslash L$ with size between 1 and $K-1$, and an element $v$ that appears first in $S$ in at most $K-|S|$ tests. Then we can pick a collection $S_1$ of at most $K-|S|$ `blocker items' from $L$, each appearing first in a test where $v$ appears first in $S$. The feasibility condition then fails for the set $S \cup S_1$ where $v$ does not appear first in any test. This proves the forward implication by contradiction.\\
For the converse, we assume \eqref{eqn7} holds. Then we can show that the feasibility condition in Theorem \ref{Thm:Equivalence} holds. For a chosen $\mathcal{K}$, we consider the set of elements $S_1$ of $\mathcal{K}$ that are not in $L$ and the set of elements $S_2$ of $\mathcal{K}$ that are in $L$. Each element of $S_2$ appears first in some test by the definition of $L$. By \eqref{eqn7}, for any element $v$ of $S_1$, there are $K+1-|S_1|=|S_2|+1$ tests where $v$ appears first is $S_1$. In at least one of these tests, $v$ must appear first in $S_1\cup S_2$, since each item in $S_2$ appears in only one test. Thus, the feasibility condition is met for every element in $\mathcal{K}$. The detailed proof can be found in Appendix~\ref{Proof:Thm3}.
\end{proof}

\subsection{Lower Bound}
With Theorem \ref{thm3}, we are now in a position to prove the following lower bound on the number of tests required by any scheme under cascaded testing.
\begin{thm}
\label{Thm:LowerBound}
    For given $N,K$, consider $\alpha,\beta \in \mathbb{N}$ with $\alpha + \beta = K+1$. Then $ N\ge \alpha(\beta+1)-1\  \Rightarrow \ T(N,K)\ge \alpha\beta$.
\end{thm}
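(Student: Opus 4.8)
The plan is to combine the reduction to systematic form with the disjointness inequality \eqref{eqn3}, and to let the hypothesis on $N$ supply exactly the room needed to produce a witnessing set. First I would invoke Corollary~\ref{cly1} to pass to an optimal feasible design $\mathcal{T}$ that is in systematic form, so that $|\mathcal{T}| = T(N,K)$. Writing $L := \{\min(t) : t \in \mathcal{T}\}$, the systematic property forces the first items of distinct tests to be distinct, whence $|L| = |\mathcal{T}| = T(N,K)$. This identification of the number of tests with the size of $L$ is the structural fact that lets the hypothesis on $N$ interact with the very quantity we are trying to bound.

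The core of the argument is a single application of Theorem~\ref{thm3} at the right scale. The key observation is that choosing a set $S \subset V \setminus L$ of size exactly $\alpha$ makes the threshold in \eqref{eqn7} equal to $K+1-|S| = K+1-\alpha = \beta$, using $\alpha+\beta = K+1$. Feasibility of $\mathcal{T}$ then guarantees $f_{\mathcal{T}}(S,v) \ge \beta$ for every $v \in S$. Summing over the $\alpha$ elements of $S$ and invoking \eqref{eqn3}, I would obtain
\[
\alpha\beta = \sum_{v \in S} \beta \le \sum_{v \in S} f_{\mathcal{T}}(S,v) \le |\mathcal{T}| = T(N,K),
\]
which is precisely the desired bound. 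Note that $1 \le \alpha \le K$ is immediate from $\alpha,\beta \ge 1$ and $\alpha+\beta = K+1$, so $S$ lies in the admissible range $1 \le |S| \le K$ required by Theorem~\ref{thm3}.

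The one point requiring care — and the only place the hypothesis $N \ge \alpha(\beta+1)-1$ is used — is ensuring that a set $S \subset V \setminus L$ with $|S| = \alpha$ actually exists, i.e. that $|V \setminus L| = N - |L| \ge \alpha$. I would handle this by a case split: if $|L| = T(N,K) \ge \alpha\beta$ then the conclusion already holds, so I may assume $|L| \le \alpha\beta - 1$, and then $N \ge \alpha(\beta+1)-1 = \alpha\beta + \alpha - 1$ yields $N - |L| \ge \alpha$ exactly. The main conceptual obstacle is thus not the estimate itself but recognizing that the budget of $N$ items is split between $L$ — whose size equals the number of tests — and the region $V \setminus L$ from which the adversarial set $S$ must be drawn; the threshold $\alpha(\beta+1)-1$ is tuned so that even when $L$ is as large as possible under the contrary assumption, exactly $\alpha$ fresh items remain to form $S$.
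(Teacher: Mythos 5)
Your proposal is correct and follows essentially the same route as the paper: reduce to a systematic-form design via Corollary~\ref{cly1}, apply Theorem~\ref{thm3} to a set $S \subset V \setminus L$ of size $\alpha$ together with inequality \eqref{eqn3}, and handle the existence of $S$ by a case split on the size of $L$. Your case split (assume $|\mathcal{T}| \le \alpha\beta - 1$, deduce $N - |L| \ge \alpha$) is just the contrapositive arrangement of the paper's two cases ($N - T \ge \alpha$ versus $N - T \le \alpha - 1$), so the arguments are the same in substance.
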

\begin{proof}
    By Corollary \ref{cly1}, we have $T(N,K) = $min$\{|\mathcal{T}|:\mathcal{T} \in \mathcal{G}_{N,K}\}$. Consider any $\mathcal{T}\in \mathcal{G}_{N,K}$. Let $T:=|\mathcal{T}|$ and $L:=\{\text{min}(t):t\in \mathcal{T}\}$. Suppose $N\ge \alpha(\beta+1)-1$.\\
    Case 1: If $N-T\ge \alpha \Rightarrow |V|-|L|\ge \alpha$. Thus we can choose $S\subset V\backslash L$ such that $|S| = \alpha$. Since $1\le|S|\le K$, we can apply Theorem \ref{thm3} to obtain $f_\mathcal{T}(S,v) \ge K+1-\alpha = \beta\ \forall v\in S$. Using \eqref{eqn3}, we get:
    $$T \ge \sum_{v\in S}f_\mathcal{T}(S,v)\ge|S|\beta = \alpha \beta$$
    Case 2: If $N-T\le \alpha -1$, we have $T\ge N+1-\alpha \ge \alpha(\beta+1)-\alpha = \alpha \beta$.\\
    
     $\Rightarrow |\mathcal{T}|\ge \alpha\beta\ \forall \mathcal{T}\in \mathcal{G}_{N,K}\Rightarrow T(N,K) \ge \alpha \beta$.
\end{proof}
Choosing $\alpha = \lfloor \frac{K+1}{2}\rfloor$, $\beta = \lceil\frac{K+1}{2}\rceil$ in Theorem~\ref{Thm:LowerBound}, we get the following $\Omega(K^2)$ lower bound on $T(N,K)$.
\begin{corollary}
\label{Cor:LB}
    Suppose $N\ge \lfloor \frac{K+1}{2}\rfloor(\lceil\frac{K+1}{2}\rceil+1)-1$, then:
    $$T(N,K) \ge \left \lfloor \frac{K+1}{2} \right\rfloor \left \lceil\frac{K+1}{2} \right \rceil .$$
\end{corollary}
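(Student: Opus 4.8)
The plan is to apply Theorem~\ref{Thm:LowerBound} directly with a suitably chosen pair $(\alpha,\beta)$. Since Theorem~\ref{Thm:LowerBound} provides the lower bound $T(N,K)\ge\alpha\beta$ for any split $\alpha+\beta=K+1$ (whenever $N$ is large enough), the natural idea is to pick the split that maximizes the product $\alpha\beta$. By the AM--GM inequality, for a fixed sum $\alpha+\beta=K+1$ the product $\alpha\beta$ is largest when $\alpha$ and $\beta$ are as balanced as possible, which points to the choice $\alpha=\lfloor\frac{K+1}{2}\rfloor$ and $\beta=\lceil\frac{K+1}{2}\rceil$.

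First I would verify that this choice is admissible for Theorem~\ref{Thm:LowerBound}, i.e. that $\alpha,\beta\in\mathbb{N}$ and $\alpha+\beta=K+1$. Integrality and positivity are immediate, and the sum condition follows from the elementary identity $\lfloor x\rfloor+\lceil x\rceil=2x$ valid whenever $2x\in\mathbb{Z}$. Applying this with $x=\frac{K+1}{2}$ (so that $2x=K+1\in\mathbb{Z}$) gives $\lfloor\frac{K+1}{2}\rfloor+\lceil\frac{K+1}{2}\rceil=K+1$; one may alternatively check this directly by splitting into the cases $K$ even and $K$ odd.

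Next I would observe that the hypothesis of the corollary is exactly the hypothesis $N\ge\alpha(\beta+1)-1$ of Theorem~\ref{Thm:LowerBound} under this substitution, since $\alpha(\beta+1)-1=\lfloor\frac{K+1}{2}\rfloor(\lceil\frac{K+1}{2}\rceil+1)-1$. Hence Theorem~\ref{Thm:LowerBound} applies verbatim and yields $T(N,K)\ge\alpha\beta=\lfloor\frac{K+1}{2}\rfloor\lceil\frac{K+1}{2}\rceil$, which is the claimed bound.

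There is essentially no obstacle here, as the corollary is merely the specialization of the preceding theorem to the balanced split; the only point demanding a little care is confirming the floor--ceiling identity so that $(\alpha,\beta)$ is indeed a valid pair summing to $K+1$. Finally, to make the stated asymptotic consequence explicit, I would note that $\lfloor\frac{K+1}{2}\rfloor\lceil\frac{K+1}{2}\rceil\ge\frac{K^2}{4}$ (checking the even and odd cases), which establishes the advertised $\Omega(K^2)$ lower bound on $T(N,K)$.
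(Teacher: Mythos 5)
Your proposal is correct and matches the paper's own argument exactly: the paper also proves this corollary by instantiating Theorem~\ref{Thm:LowerBound} with the balanced split $\alpha=\lfloor\frac{K+1}{2}\rfloor$, $\beta=\lceil\frac{K+1}{2}\rceil$. Your additional checks (the identity $\lfloor\frac{K+1}{2}\rfloor+\lceil\frac{K+1}{2}\rceil=K+1$ and the bound $\alpha\beta\ge\frac{K^2}{4}$) are sound and simply make explicit what the paper leaves implicit.
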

\remove{
\textcolor{red}{The proof is based on a probabilistic argument an can be found in Appendix~\ref{Sec:RClaim}. The above result implies an upper bound of $O(K^2\log(N / K))$ on $T(N,K)$, which when compared to the $\Omega(K^2)$ lower bound from Corollary~\ref{Cor:LB} shows a gap of at most a factor of $O(\log N)$. Another point to note is that 
As we see next, by carefully designing testing strategies for the cascaded group testing model, one can in fact achieve order-wise fewer tests than the binary testing model.}
}
\section{Feasible testing designs}
\label{Sec:Exp}
In this section, we will present upper bounds on the optimal number of tests $T(N,K)$. Firstly, note that since the cascaded testing model provides at least as much information as the standard binary group testing model, any achievable strategy for the latter is also a feasible one for the former. Thus, an upper bound of $O(K^2\min\{\log^2_K N, \log N\})$ on $T(N,K)$ follows from \cite{kautz1964nonrandom,porat2008explicit} which provided explicit constructions for the binary group testing model. Additionally, an upper bound of $O(K^2\log(N/K))$ can be established via a randomized construction, and is included in Appendix~\ref{Sec:RClaim}. Using the upper bound of $O(K^2\log^2_K N)$ and lower bound of $\Omega(K^2)$, we obtain $T(N,K) = \Theta(K^2)$ in the regime $K=\Theta(N^\alpha)$ for some fixed $\alpha \in (0,1)$. The same asymptotics are true for standard non-adaptive group testing in this regime. As we show next, we obtain a significant improvement in the regime $K = \Theta(1)$.

For any constant $K$ and arbitrary $N$, we present explicit testing designs which use order-wise fewer tests than the upper bounds discussed above. We will first consider $K = 1, 2, 3$ and then describe the general construction. While $K = 1,2$ are straightforward, the case of $K \ge 3$ is much more challenging. For simplicity, we will henceforth consider the set of items $V = \{1,2,..., N\}$ w.l.o.g.
\subsection{$K = 1,2$}
 For $K=1$, we can meet the condition in Theorem~\ref{Thm:Equivalence} using a single test with all items included, i.e., $t_1 =  (1,2,...,N) .$
\\ 
 For $K=2$, we can meet the condition in Theorem~\ref{Thm:Equivalence} using two tests:
    \begin{align*}
        t_1 = (1,2,...,N),\ t_2 = (N,N-1,...,1)
    \end{align*}

The above constructions are also optimal as they achieve the obvious lower bound of $K$ tests. Thus for $K=1,2$ we have $T(N,K)$ independent of the total number of items $N$. Note that this is in contrast to standard non-adaptive binary group testing where for $K = O(1)$,  $\Omega\left(\log N \right)$ tests are necessary. This demonstrates the possible reduction in optimal testing design size that the more informative cascaded testing model might allow for.
\subsection{$K=3$}
For $K=3$, we provide a recursive construction. Suppose we have a feasible testing design $\mathcal{T}_1$ for $n$ items. Additionally, assume that each test in $\mathcal{T}_1$ is a permutation of all $n$ items. We will provide a procedure to construct a feasible design $\mathcal{T}_2$ for $n^2$ items, with $|\mathcal{T}_2| \le |\mathcal{T}_1|+4$.

\noindent {\bf Procedure $\mathcal{A}$:} Let $N=n^2$ and let $V = \{1,...,N\}$ be the set of $N$ items. Partition $V$ into disjoint sets $A_1, A_2, .., A_n$ where $A_i = \{(i-1)n+1, (i-1)n + 2, ..., in \}$.

Next, we introduce some notation which will help us describe the construction. Given permutations $s_1, s_2$ of $n$ items each, we define the permutation $s_3 = s_1 \circ s_2$ of $N = n^2$ items as follows: for each $i$, arrange the $n$ items of $A_i$ according to $s_2$ and call the resulting permutation $h_i$. Next arrange the permutations $h_1, h_2,...,h_n$ according to the order in $s_1$ to obtain the overall permutation $s_3$ on $N$ items. As an example, consider $n = 3$, $N = n^2 = 9$, and $A_i = \{3i-2,3i-1, 3i\}$ for $i = 1,2,3$. If $s_1 = (2,3,1), s_2 = (1,3,2)$, then $h_i = (3i-2, 3i, 3i-1)$ and $s_1 \circ s_2 = (h_2, h_3, h_1) = (4,6,5, 7,9,8, 1,3,2)$.  

Now suppose $\mathcal{T}_1 = \{t_1,..., t_{|\mathcal{T}_1|}\}$. Consider the following set of permutations of $N$ items:

$$\mathcal{F} := \{t_i\circ t_i : i\in [|\mathcal{T}_1|] \}.$$

Now, consider two additional permutations of $n$ items given by $g_1 = (1,2,..., n)$ and $g_2 = (n,n-1,..., 1)$. Let $$\mathcal{H} := \{g_i\circ g_j: i,j\in [2]\}.$$
Note that $|\mathcal{F}|=T$, $|\mathcal{H}| = 4$. Finally, the overall testing design for $N = n^2$ items is given by $\mathcal{T}_2 := \mathcal{F}\cup \mathcal{H}$, and thus $|\mathcal{T}_2| \le |\mathcal{T}_1|+4$.
\begin{exmp}
 For $n = 3$ items $\{1, 2, 3\}$, consider the (trivial) feasible testing design $\mathcal{T}_1= \{(1,2,3), (2, 1, 3), (3, 2, 1)\} = \{t_1,t_2,t_3\}$ with $T = 3$ tests. For $N = n^2 = 9$ items $\{1,..., 9\}$, we use procedure $\mathcal{A}$ to devise a feasible testing design $\mathcal{T}_2$. Set $A_1 = \{1, 2, 3\}$, $A_2 = \{4, 5, 6\}$ and $A_3 = \{7,8,9\}$. Then we can construct $\mathcal{F}$ containing the following permutations: 
 \begin{align*}
\!\!t_1\circ t_1 &\!=\! (1,2,3,4,5,6,7,8,9), \
t_2\circ t_2 \!=\! (5,4,6,2,1,3,8,7,9),\\
&\hspace{.8in} t_3\circ t_3 = (9,8,7,6,5,4,3,2,1).
 \end{align*}
 We can also construct $\mathcal{H}$ using $g_1 = (1,2,3), g_2 = (3,2,1)$ as follows 
\begin{align*}
    \!\!g_1\circ g_1 \!=\! (1, 2, 3, 4, 5, 6, 7, 8, 9), \ 
    g_1 \circ g_2 \!=\! (3,2,1,6,5,4,9,8,7)\\
    \!\!g_2\circ g_1 \!=\! (7,8,9,4,5,6,1,2,3),\
    g_2\circ g_2 \!=\! (9,8,7,6,5,4,3,2,1)
\end{align*}
This gives $\mathcal{T}_2 = \mathcal{F} \cup \mathcal{H}$ as:
\begin{align*}
    \mathcal{T}_2 =\{ (1, 2, 3, 4, 5, 6, 7, 8, 9),
    (9,8,7,6,5,4,3,2,1),\\
    (3,2,1,6,5,4,9,8,7),
    (7,8,9,4,5,6,1,2,3)\\(5,4,6,2,1,3,8,7,9)\}
\end{align*}
It can be verified that $\mathcal{T}_2$ above satisfies the condition in Theorem~\ref{Thm:Equivalence}, and is thus a feasible design for $N = 9, K = 3$. Claim \ref{Claim:Proc} in the next subsection proves that this is true for any testing design matrix constructed using Procedure $\mathcal{A}$.  
\end{exmp}
Since we are guaranteed feasibility of the constructed testing design, we can now recursively apply Procedure $\mathcal{A}$
 starting with $\mathcal{T}_1 = \{(1,2,3),(2,1,3),(3,2,1)\},\ n = 3$, to get designs for any number  of items. Moreover, Procedure $\mathcal{A}$ gives us the following guarantee $\forall n\ge 3$:
 \begin{equation}
 \label{Eqn:recursion}
T(n^2,3) \le T(n, 3) + 4.
\end{equation}
This recursion implies that $T(N, 3) = O(\log\log N)$, which is a significant improvement over standard group testing where at least $\Omega(\log N)$ tests will be required. The following result shows that this construction is asymptotically optimal.

\begin{claim}
\label{erdos-szekeres}
   For $K\ge 3$, we have $T(N,K) > \lfloor \log_2 \log_2 (N-1)\rfloor$.
\end{claim}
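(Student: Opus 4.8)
The plan is to reduce feasibility for $K\ge 3$ to a statement about \emph{common monotone triples} of the test permutations, and then invoke the Erdős–Szekeres theorem iteratively. First I would argue that we may assume every test is a permutation of all $N$ items: given any feasible design, appending the missing items at the end of each test produces full permutations, leaves $|\mathcal{T}|$ unchanged, and preserves feasibility, since whenever $v=\min(t\cap\mathcal{K})$ in an original test $t$, the item $v$ still precedes every other element of $\mathcal{K}$ in the augmented test (the appended items all come last). So it suffices to lower bound $T$ for designs consisting of $T$ permutations $\pi_1,\dots,\pi_T$ of $[N]$.

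The key structural observation I would establish is this: if three items $a,b,c$ form a common monotone triple, meaning there is a reference order $a\prec b\prec c$ such that in \emph{every} test the triple appears either as $a,b,c$ or as $c,b,a$, then the middle item $b$ is never the first element of $\{a,b,c\}$ in any test (the first is always $a$ or $c$). Consequently, for any $\mathcal{K}$ of size $K$ containing $\{a,b,c\}$ and any test $t$, one of $a,c\in\mathcal{K}$ precedes $b$, so $b\neq\min(t\cap\mathcal{K})$, violating condition \eqref{eqn1} of Theorem~\ref{Thm:Equivalence} for the pair $(\mathcal{K},b)$. Since $N\ge K\ge 3$, such a $\mathcal{K}$ exists, so \emph{any feasible design, for any $K\ge 3$, contains no common monotone triple}. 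Allowing the two monotone directions to differ across tests is exactly what makes the \emph{middle} element the obstruction, and is crucial to the argument.

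It then remains to show that $T$ permutations of $[N]$ with no common monotone triple force $N$ to be small. Here I would iterate Erdős–Szekeres: fixing $\pi_1$ as the reference order, repeatedly extract from the current monotone subset a sub-subset that is also monotone in the next permutation. Each application of Erdős–Szekeres turns a set of size $m$ into a monotone subset of size at least $\lceil\sqrt m\rceil$, so after processing $\pi_2,\dots,\pi_T$ one obtains a set monotone in all $T$ permutations of size at least $N^{1/2^{T-1}}$. If this exceeds $2$ we get a common monotone triple, a contradiction; hence $N^{1/2^{T-1}}<3$, i.e.\ $N<3^{2^{T-1}}<4^{2^{T-1}}=2^{2^{T}}$. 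Therefore $N-1<2^{2^T}$, which gives $\log_2\log_2(N-1)<T$ and thus $T(N,K)=T>\lfloor\log_2\log_2(N-1)\rfloor$, as claimed (for small $N$ the bound is immediate since $T(N,K)\ge 1$).

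The main obstacle is the structural reduction in the first two paragraphs: recognizing that feasibility for $K=3$ is equivalent to the non-existence of a common monotone triple, and in particular that the correct obstruction is the \emph{middle} element of a triple that is monotone in each test, with the directions allowed to vary. Once this is in place the Erdős–Szekeres iteration is standard, and only the bookkeeping of constants (checking $3^{2^{T-1}}<2^{2^T}$ so as to land exactly on the $\lfloor\log_2\log_2(N-1)\rfloor$ bound) requires care. For $K>3$ no new idea is needed: the same triple dooms every size-$K$ superset containing it, so the identical bound holds.
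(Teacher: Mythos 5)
Your proof is correct and follows essentially the same approach as the paper: iterate the Erd\H{o}s--Szekeres theorem across the tests to extract a subset of size at least $3$ that is monotone in every test, then observe this violates the feasibility condition of Theorem~\ref{Thm:Equivalence}. In fact, you make explicit two steps the paper leaves implicit (why full permutations are w.l.o.g., and why the \emph{middle} element of a common monotone triple can never be identified), and your bookkeeping ($N < 3^{2^{T-1}} < 2^{2^T}$ for arbitrary $N$, rather than the paper's special values $N = 2^{2^r}+1$ plus monotonicity of $T(N,3)$) lands on the same bound.
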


\begin{proof}
    The proof uses the Erd\H{o}s-Szekeres(E-S) theorem, which states that for any real sequence of length $n^2+1$ there is a monotone subsequence of length $n+1$. Consider a permutation $t = (u_1,u_2,...,u_{n^2+1})$ of items $\{1,\ldots, n^2+1\}$. By the E-S theorem, there exist $i_1<i_2<...<i_l$ with $l \ge n+1$, such that the sequence of items $u_{i_1}, u_{i_2},...,u_{i_l}$ is monotone. In other words, we can find $n+1$ items that are arranged either in an increasing or decreasing order in $t$.\\
    Let $\mathcal{T} = \{t_1,t_2,...,t_T\}$ be an optimal design for $K=3$ and $N = 2^{2^r}+1$ for arbitrary integer $r\ge 0$. W.l.o.g. we can assume each $t_i$ is a permutation on $[N]$.\\
    Suppose $T\le r$. By the E-S theorem, there exists a subset of $n_1 = 2^{2^{(r-1)}}+1$ items whose relative ordering in $t_1$ is monotone. Now consider the permutation $t_2$ restricted to these items. Again, by E-S, there exists a monotone subsequence of this set of items of size $n_2 = 2^{2^{(r-2)}}+1$. Proceeding inductively, we thus get a subset $A$ of size $n_T = 2^{2^{(r-T)}}+1 \ge 3$, such that each $t_i$ when restricted to $A$ has the items appear in increasing order or decreasing order. But since $n_T\ge 3$, the feasibility condition in Theorem \ref{Thm:Equivalence} is not satisfied and we have a contradiction. Thus, our assumption that $T \le r$ must be incorrect and $T > r = \log _2 \log _2 (N-1)$.\\
    For arbitrary $N \ge 3$, let $r_1 := \lfloor \log_2 \log_2 (N-1)\rfloor$, and $N_1 = 2^{2^{r_1}}+1 \le N$. Then, $T(N,3) \ge T(N_1, 3) > r_1$, that is,
    $$T(N,3) > \lfloor \log_2 \log_2 (N-1)\rfloor.$$
    Since $T(N,K)\ge T(N,3)$ for $K\ge 3$, the claim holds.
\end{proof}

\subsection{$K \ge 3$}
We now consider a generalisation of Procedure $\mathcal{A}$ to any $K\ge 3$. Suppose we have a feasible testing design $\mathcal{T}_1$ for parameters $n,K$ and another feasible design $\mathcal{G}_1$ for parameters $n,(K-1)$. Additionally, assume all tests in these designs are permutations of all $n$ items. We will provide a procedure to construct a feasible design $\mathcal{T}_2$ for $N = n^2$ items, with $|\mathcal{T}_2| \le |\mathcal{T}_1|+|\mathcal{G}_1|^2$.

\noindent {\bf Generalized Procedure $\mathcal{A}$:} Let $N = n^2$ and let $V = \{1,...,N\}$ be the set of $N$ items. Partition $V$ into disjoint sets $A_1, A_2, .., A_n$ where $A_i = \{(i-1)n+1, (i-1)n + 2, ..., in \}$. Now construct $\mathcal{F}, \mathcal{H}$ as follows:
$$\mathcal{F} := \{t\circ t : t\in \mathcal{T}_1\},\ \mathcal{H}=\{g\circ g': g, g'\in \mathcal{G}_1\}.$$
Note that $|\mathcal{F}| = |\mathcal{T}_1|,\ |\mathcal{H}| = |\mathcal{G}_1|^2$. Finally, the overall testing design for $N = n^2$ items is given by $\mathcal{T}_2 := \mathcal{F}\cup \mathcal{H}$, and thus $|\mathcal{T}_2| \le |\mathcal{T}_1|+|\mathcal{G}_1|^2$.

Note that for $K=3$, we can use $\mathcal{G}_1 = \{(1,2,...,n), (n,n-1,...,1)\}$ which is feasible for parameters $n, (K - 1) = 2$. This reduces the procedure to Procedure $\mathcal{A}$ for $K=3$.

The following claim establishes the feasibility of the design obtained from the generalized Procedure $\mathcal{A}$.
\begin{claim}
\label{Claim:Proc}
    The testing design $\mathcal{T}_2$ obtained from the generalized Procedure $\mathcal{A}$ is feasible for parameters $N,K$.
\end{claim}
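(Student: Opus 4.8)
The plan is to verify the feasibility condition of Theorem~\ref{Thm:Equivalence} directly for $\mathcal{T}_2 = \mathcal{F}\cup\mathcal{H}$: fixing an arbitrary defective set $\mathcal{K}\subset V$ with $|\mathcal{K}|=K$ and an arbitrary $v\in\mathcal{K}$, I would exhibit a single test $t\in\mathcal{T}_2$ with $v=\min(t\cap\mathcal{K})$. The first step is to record how a composed permutation reads off a winner. For $s=s_1\circ s_2$, an item of block $A_i$ precedes an item of block $A_j$ (with $i\neq j$) exactly when $i$ precedes $j$ in $s_1$, while two items of the same block are ordered by $s_2$ acting on their within-block indices. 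Writing $B:=\{b: A_b\cap\mathcal{K}\neq\emptyset\}$ for the set of occupied blocks and $b^*:=\min(s_1\cap B)$ for the earliest occupied block under $s_1$, this gives $\min(s\cap\mathcal{K})=\min(s_2\cap J_{b^*})$, where $J_{b^*}\subseteq[n]$ is the set of within-block indices of the defectives lying in $A_{b^*}$ (identifying $s_1,s_2$ with permutations of $[n]$). Hence $v$ wins in $s$ if and only if (i) $v$'s block is earliest under $s_1$, and (ii) $v$'s index is earliest in $J_{b(v)}$ under $s_2$.

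Next I would establish an extension lemma so that the size-$K$ feasibility of $\mathcal{T}_1$ and the size-$(K-1)$ feasibility of $\mathcal{G}_1$ can be invoked on smaller index sets: if a design on $[n]$ is feasible for parameter $k$ and $S\subseteq[n]$ with $1\le|S|\le k$, then for every $u\in S$ some test $t$ in the design satisfies $u=\min(t\cap S)$. This follows by padding $S$ to a set of size $k$, applying Theorem~\ref{Thm:Equivalence}, and noting that a winner over a superset is automatically a winner over the subset.

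With these tools I would split on $m:=|B|$, setting $b^*=b(v)$ and $d:=|\mathcal{K}_{b^*}|$. When $m=1$, all $K$ defectives lie in $A_{b^*}$, so (i) is automatic and I only need (ii) with $|J_{b^*}|=K$; a test $t\circ t\in\mathcal{F}$ works by the size-$K$ feasibility of $\mathcal{T}_1$. When $m=K$, every occupied block holds exactly one defective, so $d=1$ makes (ii) automatic and I only need (i) with $|B|=K$; again $t\circ t\in\mathcal{F}$ works by the size-$K$ feasibility of $\mathcal{T}_1$. When $2\le m\le K-1$, both levels are nontrivial but small: $|B|=m\le K-1$, and since some defective lies outside $A_{b^*}$ we have $d\le K-1$. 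Because $\mathcal{H}$ contains every pair $g\circ g'$ with $g,g'\in\mathcal{G}_1$, I can choose $g$ to satisfy (i) and $g'$ to satisfy (ii) independently, each via the extension lemma applied to the size-$(K-1)$-feasible design $\mathcal{G}_1$, and then $g\circ g'\in\mathcal{H}$ makes $v$ win.

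The hard part — and the reason the construction pairs a $K$-feasible $\mathcal{T}_1$ inside $\mathcal{F}$ with a $(K-1)$-feasible $\mathcal{G}_1$ inside $\mathcal{H}$ — is matching the sizes at the two levels to the available feasibility guarantees. A single permutation $t$ cannot in general be forced to win both the block level and the within-block level at once, so $\mathcal{F}$ is usable only in the two boundary cases $m=1$ and $m=K$, where one level degenerates and the entire size-$K$ budget can be spent on the other. The genuinely two-sided middle cases are exactly those in which both $m$ and $d$ drop to at most $K-1$, which is precisely what the independent pair $g\circ g'$ drawn from the size-$(K-1)$ design $\mathcal{G}_1$ is built to cover; verifying that these three regimes are exhaustive and that each size bound is tight is where I expect the main bookkeeping effort to lie.
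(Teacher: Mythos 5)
Your proposal is correct and follows essentially the same route as the paper's proof: the three regimes you identify (all defectives in one block, one defective per block, and the intermediate case $2\le m\le K-1$) are exactly the paper's Cases 1--3, with $\mathcal{F}$ covering the two boundary cases via the size-$K$ feasibility of $\mathcal{T}_1$ and $\mathcal{H}$ covering the middle case via independent choices $g,g'\in\mathcal{G}_1$. Your explicit statements of the winner characterization for composed permutations and of the padding/extension lemma are steps the paper uses implicitly, so they add rigor without changing the argument.
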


\begin{proof}[Proof Sketch]
We check that $\mathcal{T}_2$ satisfies the condition for feasibility in Theorem \ref{Thm:Equivalence} for parameters $N, K$. We do this by considering 3 cases for $\mathcal{K} \subset V$: 
\begin{enumerate}
    \item When all items of $\mathcal{K}$ fall into a single partition $A_i$, the feasibility of $\mathcal{T}_1$ enables us to find a permutation in $\mathcal{F}$ for each $v \in \mathcal{K}$, where $v$ is the first defective.
    \item When all items of $\mathcal{K}$ fall into $K$ separate partitions, the idea above still applies and can be used to find suitable permutations in $\mathcal{F}$.
    \item When neither of the above cases happens, there will be $L \le K-1$ partitions into which items of $\mathcal{K}$ fall, and each partition has at most $K-1$ items. Then for each $v \in \mathcal{K}$, the feasibility of $\mathcal{G}_1$ for $n,K-1$ allows us to find a permutation in $\mathcal{H}$, in which $v$ is the first defective.
\end{enumerate} The detailed proof can be found in Appendix~\ref{Sec:AppClaim}.
\end{proof}
 
Given feasible designs for parameters $N,K-1$ for arbitrary $N$, one can use this procedure to recursively construct feasible designs for $N,K$ for arbitrary $N$. Since we have feasible designs for $K = 2$, we can recursively obtain feasible designs for arbitrary $K$.  
This procedure also guarantees that $\forall n\ge K$:
\begin{equation}
\label{recursion}
    T(n^2,K)\le T(n,K)+T(n,K-1)^2.
\end{equation}
This allows us to prove the following result: 
\begin{claim}
\label{Claim:Loglog}
    For a fixed $K \ge 3$, we have $T(N,K) = O((\log \log N )^{a_K})$, where $a_K = 2^{(K-2)}-1$.
\end{claim}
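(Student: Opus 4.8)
The plan is to convert the multiplicative recursion \eqref{recursion} into an additive recurrence by tracking the number of squaring steps, and then run a double induction: an outer induction on $K$, and for each $K$ an inner telescoping argument over the squaring depth.

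First I would fix a constant base $b \ge K$ and set $N_t := b^{2^t}$ for integers $t \ge 0$, so that each step $N_{t-1}\mapsto N_t$ is one squaring. Writing $u_K(t) := T(N_t, K)$, the recursion \eqref{recursion} (applicable since $N_{t-1} = b^{2^{t-1}} \ge b \ge K$) becomes the additive recurrence
$$u_K(t) \le u_K(t-1) + u_{K-1}(t-1)^2, \qquad t \ge 1.$$
Because $T(\cdot, K)$ is non-decreasing in the number of items — any feasible design on $N'$ items restricts to a feasible design on $N \le N'$ items by deleting the surplus items from every test, which preserves the condition in Theorem~\ref{Thm:Equivalence} — a general $N$ is handled by taking the least $t$ with $N \le N_t$, which gives $t = \Theta(\log\log N)$ and $T(N,K) \le u_K(t)$. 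It therefore suffices to prove $u_K(t) = O(t^{a_K})$.

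Second, I would carry out the induction on $K$. The base case is $K=2$, where $T(N,2)=2$, so $u_2(t) = O(1) = O(t^{a_2})$ with $a_2 = 2^0 - 1 = 0$. For the inductive step, assuming $u_{K-1}(t) = O(t^{a_{K-1}})$, I telescope the additive recurrence and bound the resulting sum (after shifting the index):
$$u_K(t) \le u_K(0) + \sum_{s=0}^{t-1} u_{K-1}(s)^2 = O(1) + O\!\left(\sum_{s=1}^{t} s^{2 a_{K-1}}\right) = O\!\left(t^{2 a_{K-1} + 1}\right).$$
The exponents then satisfy $a_K = 2 a_{K-1} + 1$ with $a_2 = 0$, whose solution is $a_K = 2^{K-2} - 1$, closing the induction and giving $u_K(t) = O(t^{a_K})$. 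Substituting $t = \Theta(\log\log N)$ yields $T(N,K) = O((\log\log N)^{a_K})$, as claimed.

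The argument is short because the substantive content already resides in Claim~\ref{Claim:Proc} and the recursion \eqref{recursion}; what remains is bookkeeping. The main point requiring care is the inner telescoping: one must keep the hidden constants uniform across the depth $t$ — I would prove a clean explicit bound $u_K(t) \le C_K\,(t+1)^{a_K}$ with $C_K$ depending only on $K$ — so that the standard estimate $\sum_{s=1}^{t} s^{p} = \Theta(t^{p+1})$ can be applied without the constant accumulating as $t$ grows. The secondary details, namely the monotonicity reduction and the finiteness of the base value $u_K(0) = T(b,K) = O(1)$, are immediate.
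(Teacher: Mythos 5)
Your proposal is correct and takes essentially the same route as the paper: induction on $K$ driven by the recursion \eqref{recursion} applied along a tower of repeated squarings, combined with monotonicity of $T(\cdot,K)$ in the number of items, yielding the same exponent recurrence $a_K = 2a_{K-1}+1$. The only cosmetic differences are that you anchor the induction at $K=2$ (with $T(N,2)=2$) rather than at $K=3$ via \eqref{Eqn:recursion}, and you bound the telescoped sum by $\sum_{s} s^{2a_{K-1}}$ where the paper uses (number of terms) times the largest term --- both give the same exponent.
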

The proof can be found in Appendix~\ref{Sec:Loglog}. This result implies that $T(N,K)$ grows at most as a polynomial in $\log \log N$ for a fixed $K$. This is a significant improvement over standard group testing, where the size of designs is $\Omega(\log N)$.

\section{Discussion}
\label{Sec:Disc}
While we were able to present bounds on the optimal size $T(N,K)$ for cascaded group testing, several pertinent questions remain open. While we were able to demonstrate explicit (near)-optimal constructions for fixed values of $K$, the general case is open. This will also help provide a characterization of the advantage that the additional information in a cascaded test provides over a standard binary OR test. Beyond the zero-error noiseless setting studied here, considering recovery with small error probability and various forms of noise in the test output are interesting directions to pursue.
\remove{
To summarize, we formulated the novel problem of cascaded group testing in adaptive and non-adaptive settings, the former turning out to be trivial. We obtained bounds on the minimum number of tests needed for perfect recovery in the non-adaptive setting. We finally provided explicit feasible designs for the cases $K=1,2,3$.\\
Finding feasible testing designs which are $o(K^2 \log N)$ for general $N, K$, and fully characterizing the asymptotic behaviour of $T(N,K)$ remain open problems. There is much scope in the exploration of other settings within cascaded testing, such as small-error non-adaptive testing, noisy testing, and restricted test designs (such as restricting the number of items or allowing tests only from a predefined set). In the small-error setting, it can easily be seen through a randomized construction that $O(K \log K)$ tests are sufficient to have zero probability of error in the limit. However, the lower bound $K$ tests is order wise smaller and it remains open whether better orders are achievable.
}
\remove{
\section*{Acknowledgment}

The preferred spelling of the word ``acknowledgment'' in America is without 
an ``e'' after the ``g''. Avoid the stilted expression ``one of us (R. B. 
G.) thanks $\ldots$''. Instead, try ``R. B. G. thanks$\ldots$''. Put sponsor 
acknowledgments in the unnumbered footnote on the first page.
}
\balance
\bibliographystyle{IEEEtran}
\bibliography{references}

\newpage

\begin{appendices}
\section{Proof of Theorem~\ref{Thm:Equivalence}}
\label{Sec:App1}
\begin{proof}

 It is easy to see that $\mathcal{T}$ is a feasible testing design if and only if $\Omega_\mathcal{T}$ is a one-to-one function. We will now argue that \eqref{eqn1} is necessary and sufficient for $\Omega_\mathcal{T}$ to be one-to-one.

     First, we prove the forward implication.
     Suppose $\Omega_\mathcal{T}$ is one-one, but condition \eqref{eqn1} does not hold. Then $\exists \mathcal{K} \subset V$ with $|\mathcal{K}| = K$ and $\exists v \in \mathcal{K}$, such that $\forall t \in \mathcal{T}$ one of the following holds: a) $v \notin t$, or b) $v\in t$ and $\exists x \in \mathcal{K} \cap t$ satisfying
    $x <_t v$. \\
    In case a), we have min$(t \cap \mathcal{K})=$min$(t\cap \mathcal{K} \backslash \{v\})$.\\
    In case b), we must have min$(t \cap \mathcal{K}\backslash\{v\})\le_t x <_t v$. Thus, min$(t \cap \mathcal{K} \backslash \{v\}) =$min$(t \cap \mathcal{K})$, since the first item of a test is not changed upon adding an item after it.\\
    Thus $\forall t \in \mathcal{T}$, min$(t \cap \mathcal{K})=$ min$(t \cap \mathcal{K} \backslash \{v\})\Rightarrow \Omega_\mathcal{T}(\mathcal{K}) = \Omega_\mathcal{T}(\mathcal{K} \backslash \{v\})$. This is a contradiction since $\Omega_\mathcal{T}$ was assumed to be one-one.

    Next, we prove the converse, i.e., condition \eqref{eqn1} implies $\Omega_\mathcal{T}$ is one-one. Consider any $S_1, S_2 \in \mathcal{X}$ with $S_1 \neq S_2$. Assume w.l.o.g. that $S_1\backslash S_2 \neq \emptyset$. So $\exists x_1\in S_1$ such that $x_1 \notin S_2$.\\
    Consider some $\mathcal{K} \supset S_1$ with $|\mathcal{K}| = K$. By \eqref{eqn1}, we have that $\exists t \in \mathcal{T}$ such that $x_1 \in t$ and $x_1 <_t x\ \forall \ x \in \mathcal{K} \cap t$ $\Rightarrow$ min$(t \cap S_1) = x_1$. Also, min$(t \cap S_2) \neq x_1$ since $x_1 \notin S_2$.
    Thus, min$(t\cap S_1) \neq$ min$(t \cap S_2)$
    $$ \Rightarrow \Omega_\mathcal{T}(S_1) \neq \Omega_\mathcal{T}(S_2).$$ Since the above argument holds for any distinct $S_1, S_2 \in \mathcal{X}$, we have that 
    \eqref{eqn1} implies $\Omega_\mathcal{T}$ is one-one.
\end{proof}

\section{Proof of Theorem~\ref{thm2}}
\label{Proof:Thm2}
Before we prove Theorem \ref{thm2}, we prove an intermediate result.

\begin{lemma} \label{lemma1}
    Suppose we have a feasible testing design $\mathcal{T} = \{t_1,...,t_T\}$. Given $l\in [T]$ s.t. $t_{l} \neq \emptyset$, define $\hat{t}_i = t_i \backslash \{\text{min}(t_l)\}$ for $i\in [T]\backslash \{l\}$ and $\hat{t}_{l} = t_{l}$. Then the design $\hat{\mathcal{T}} = \{\hat{t}_1,...,\hat{t}_T\}$ is feasible.
\end{lemma}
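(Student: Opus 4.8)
The plan is to verify that $\hat{\mathcal{T}}$ satisfies the feasibility condition \eqref{eqn1} of Theorem~\ref{Thm:Equivalence}, using the fact that $\mathcal{T}$ already does. Write $m := \text{min}(t_l)$ for the item being deleted, and recall that $m$ is the first item of $t_l$ while $t_l$ is the one test left untouched (i.e., $\hat{t}_l = t_l$). I would fix an arbitrary $\mathcal{K} \subset V$ with $|\mathcal{K}| = K$ and an arbitrary $v \in \mathcal{K}$, and produce a test $\hat{t} \in \hat{\mathcal{T}}$ with $v = \text{min}(\hat{t} \cap \mathcal{K})$.

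The argument splits on whether $v = m$. If $v = m$, then because $m$ is literally the first item of $t_l = \hat{t}_l$, it precedes every other item of $\mathcal{K}$ lying in $t_l$, so $\text{min}(\hat{t}_l \cap \mathcal{K}) = m = v$ and the untouched test $\hat{t}_l$ is the required witness. This is the crucial observation: deleting $m$ from the other tests can only ever threaten witnesses for the value $v = m$, and those are exactly the instances rescued by the preserved test $t_l$.

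If instead $v \neq m$, I would invoke feasibility of $\mathcal{T}$ to obtain some $t_j$ with $v = \text{min}(t_j \cap \mathcal{K})$. When $j = l$ the same test survives unchanged and still witnesses $v$. When $j \neq l$, I claim $\hat{t}_j = t_j \backslash \{m\}$ still witnesses $v$: since $v = \text{min}(t_j \cap \mathcal{K})$ with $v \neq m$, the deleted item $m$ is either absent from the ordered intersection $t_j \cap \mathcal{K}$ or appears strictly after $v$ in $t_j$, so removing it does not disturb the first item, giving $v = \text{min}(\hat{t}_j \cap \mathcal{K})$. Having produced a witness in every case, \eqref{eqn1} holds for $\hat{\mathcal{T}}$, and Theorem~\ref{Thm:Equivalence} then yields feasibility. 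The only delicate step is the $v = m$ case, and handling it is precisely the reason the construction leaves the test $t_l$ unmodified.
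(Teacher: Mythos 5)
Your proof is correct and takes essentially the same approach as the paper's: both verify the condition of Theorem~\ref{Thm:Equivalence} by cases, using the untouched test $t_l$ as the witness for the deleted item $m = \min(t_l)$ itself, and observing that for $v \neq m$ the deletion of $m$ from a witnessing test cannot disturb its first defective. The only cosmetic difference is that you split on $v = m$ versus $v \neq m$ whereas the paper splits on whether $m \in \mathcal{K}$; the underlying observations are identical.
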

\begin{proof}
    Let $u:=$min$(t_l)$. We check if $\hat{\mathcal{T}}$ satisfies \eqref{eqn2} for cases when $u \in \mathcal{K}$ and when $u \notin \mathcal{K}$.\\
    If $u \notin \mathcal{K}$, then min$(t\cap \mathcal{K})=$ min$(t\cap \mathcal{K}\backslash\{u\})\ \forall t\in \mathcal{T}$, which implies that $\forall v\in \mathcal{K}$, $f_{\hat{\mathcal{T}}}(\mathcal{K},v) = f_\mathcal{T}(\mathcal{K},v) \ge 1$.\\
    If $u \in \mathcal{K}$, consider $v\in \mathcal{K}$. If $v =u$, we have $f_{\hat{\mathcal{T}}}(\mathcal{K},v)\ge 1$ since $u = \text{min}(\hat{t}_l\cap \mathcal{K})$. If $v\neq u$, consider $j\in [T]$ such that $v = \text{min}(t_j \cap \mathcal{K})$. Then $j\neq l$ and we have min$(\hat{t}_j\cap \mathcal{K})=\text{min}(t_j\cap \mathcal{K}\backslash\{u\}) = v\Rightarrow f_{\hat{\mathcal{T}}}(\mathcal{K},v)\ge 1$.\\
    This covers all cases and thus $\hat{\mathcal{T}}$ is feasible.
\end{proof}

With this lemma, we are ready for the proof of Theorem \ref{thm2}.

\begin{proof}[Proof of Theorem \ref{thm2}]: First, we make a simple observation: Removing an empty test from a feasible design does not change its feasibility.\\
    Applying this idea, we first remove any empty test from $\mathcal{T}_1$ to obtain the feasible design $\{h_1, h_2,..., h_T\}$. Now we apply a procedure to construct a systematic form design.\\
    Initialise $\mathcal{E}\leftarrow \emptyset,\ i \leftarrow 1$ and run the following loop:
    \begin{enumerate}
        \item If $h_i = \emptyset$: Update $\mathcal{E}\leftarrow \mathcal{E} \cup \{i\}$. Update $i\leftarrow i+1$ and start the loop again. 
        \item $\forall j \in [T] \backslash\{i\}$, update $h_j \leftarrow h_j\backslash\{\text{min}(h_i)\}$. 
        \item Update $i\leftarrow i+1$. If $i>T$, terminate the procedure.
    \end{enumerate}
    After the procedure ends, set $\mathcal{T}_2 = \{h_j:j \in[T]\backslash \mathcal{E} \}$.\\
    The design $\{h_1,...,h_T\}$ obtained after each step of the above procedure is feasible by application of Lemma \ref{lemma1}. During the $i$th loop iteration, whenever $h_i\neq \emptyset$, the update in step 2 results in min$(h_i)\notin h_j\ \forall j\in [T]\backslash\{i\}$. Since each iteration can only result in the removal of items from tests, min$(h_i)$ stays excluded from all other tests till the end. Thus in a later iteration $i'$, min$(h_i)$ is not removed from $h_i$, since min$(h_{i'})\neq\text{min}(h_i)$. This guarantees that after the procedure terminates, all empty tests are marked in $\mathcal{E}$, and the remaining tests lead to a systematic form design.
\end{proof}

\section{Proof of Theorem~\ref{thm3}}
\label{Proof:Thm3}
\begin{proof} 
    Observe that the min$(\cdot)$ of distinct tests are distinct in a systematic form design. Define $P:=V\backslash L$ for convenience.\\
    We first prove the forward implication. Suppose $\mathcal{T}$ is feasible and \eqref{eqn7} doesn't hold. Then $\exists S \subset P$ with $1\le|S|\le K$ and $v \in S$ such that $f_\mathcal{T}(S,v) \le K-|S|$.\\
    Let $\mathcal{I} := \{ t\in \mathcal{T}: v = \text{min}(t\cap S)\}$ and note that $|\mathcal{I}| = f_\mathcal{T}(S,v)$. Consider $S_1 := \{\text{min}(t): t\in \mathcal{I}\} \subset L$, with  $|S_1| = |\mathcal{I}|\le K-|S|$. Also, note that $S_1 \cap S = \emptyset$ since $S_1 \subset L$ and $S \subset P$.
    
    Now consider a set of items $S_2\supset S_1$ satisfying $S_2\cap S = \emptyset$ and $|S_2| = K-|S|$. Let $\mathcal{K}:= S \cup S_2 \Rightarrow |\mathcal{K}|= K$. Then $\forall\ t\in \mathcal{I},\ \text{min}(t)\in S_1 \subset \mathcal{K}\Rightarrow \text{min}(t\cap \mathcal{K}) = \text{min}(t) \neq v$. By definition of $\mathcal{I}$, $\forall t\in \mathcal{T} \backslash \mathcal{I},\  \text{min}(t\cap \mathcal{K})\le_t \text{min}(t\cap S) <_t v \Rightarrow\ v\neq \text{min}(t\cap \mathcal{K})$. Thus $\forall t \in \mathcal{T}$, $v \neq \text{min}(t \cap \mathcal{K})$. By Theorem \ref{Thm:Equivalence}, this contradicts our assumption that $\mathcal{T}$ is feasible.\\
    Next, we prove the converse. Suppose \eqref{eqn7} is satisfied. We will check the feasibility of $\mathcal{T}$ using Theorem \ref{Thm:Equivalence}. Consider arbitrary $\mathcal{K} \subset V$ with $|\mathcal{K}| =K$. Let $S_1 = P\cap \mathcal{K}$ and $S_2 = L\cap \mathcal{K}$. Note that $S_1 \cap S_2 = \emptyset$ and $S_1 \cup S_2 = \mathcal{K}$.\\
    For each $v\in S_2 \subset L$, $\exists t\in \mathcal{T}, v = \text{min}(t)$ by the definition of $L$. If $S_1 = \emptyset$, $S_2 = \mathcal{K}$ and we are done.\\
    Suppose $S_1\neq \emptyset$.  For $v \in S_1 \subset P$, we know $f_\mathcal{T}(S_1,v) \ge K+1-|S_1| = |S_2|+1$. Let $\mathcal{J} := \{ t\in \mathcal{T}: v = \text{min}(t\cap S_1)\}$ and let $S_3 := \{\text{min}(t): t\in \mathcal{J}\} \subset L$. Then $|S_3| = |\mathcal{J}| = f_\mathcal{T}(S_1,v) \ge |S_2|+1\Rightarrow S_3\backslash S_2\neq \emptyset$. Consider some $u \in S_3\backslash S_2$ and let $t_u \in \mathcal{J}$ be the test satisfying $u = \text{min}(t_u)$. Since $\mathcal{T}$ is in systematic form, $\forall w\in S_2: w\notin t_u$. So min$(t_u\cap \mathcal{K}) = \text{min}(t_u\cap \mathcal{K}\backslash S_2) = \text{min}(t_u\cap S_1) = v$.\\
    Thus the condition in Theorem \ref{Thm:Equivalence} holds.
\end{proof}

\section{Randomized construction}
We now present an upper bound on $T(N,K)$, which is based on a randomized construction. 
\begin{claim}
\label{Claim:UB}
    $\forall N,K$ with $K\ge 2$,
    $$ T(N,K) \le \left \lfloor \log_{\frac{K}{K-1}} \left( K{N \choose K} \right) \right \rfloor +1 \le \left \lceil K\log \left( K{N \choose K} \right) \right \rceil$$   
\end{claim}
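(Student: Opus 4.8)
The plan is to use the probabilistic method (specifically, the union bound / first-moment argument) to show that a random collection of tests satisfies the feasibility condition from Theorem~\ref{Thm:Equivalence} with positive probability, which then guarantees the existence of a feasible design of the claimed size.

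The plan is to use the probabilistic method: I will exhibit a random design of $T$ tests, show that it satisfies the feasibility condition of Theorem~\ref{Thm:Equivalence} with positive probability once $T$ is large enough, and conclude that a feasible design of that size exists. Concretely, let each of the $T$ tests be an independent, uniformly random permutation of all $N$ items. The condition in Theorem~\ref{Thm:Equivalence} requires that for every $\mathcal{K}$ with $|\mathcal{K}| = K$ and every $v \in \mathcal{K}$, some test $t$ satisfies $v = \min(t \cap \mathcal{K})$. For a single random test, the induced order on the fixed $K$-subset $\mathcal{K}$ is uniform over its $K!$ orderings, so by symmetry $\Pr[v = \min(t \cap \mathcal{K})] = 1/K$; hence the probability that none of the $T$ independent tests places $v$ first in $\mathcal{K}$ equals $\left(\frac{K-1}{K}\right)^{T}$.

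First I would apply a union bound over all ``bad'' pairs $(\mathcal{K}, v)$. There are $\binom{N}{K}$ choices of $\mathcal{K}$ and $K$ choices of $v \in \mathcal{K}$, giving $K\binom{N}{K}$ such pairs, so the probability that the random design fails to be feasible is at most $K\binom{N}{K}\left(\frac{K-1}{K}\right)^{T}$. Forcing this quantity to be strictly below $1$ guarantees that some realization is feasible. Taking logarithms, the requirement becomes $T > \log_{K/(K-1)}\!\left(K\binom{N}{K}\right)$, and since the least integer exceeding a real $x$ is $\lfloor x\rfloor + 1$ in all cases, this yields the first inequality $T(N,K) \le \lfloor \log_{K/(K-1)}(K\binom{N}{K})\rfloor + 1$.

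For the second inequality I would convert from base $K/(K-1)$ to the natural logarithm and bound the change-of-base factor. Writing $M := K\binom{N}{K}$, the key estimate is $\log\frac{K}{K-1} = -\log\!\left(1 - \tfrac1K\right) > \tfrac1K$, valid for all $K \ge 2$, which gives $\log_{K/(K-1)} M = \frac{\log M}{\log(K/(K-1))} < K\log M$. The only real (and minor) obstacle is the floor/ceiling bookkeeping: since this inequality is strict and $M \ge 2$ (so $\log M > 0$), I have $\lfloor \log_{K/(K-1)} M \rfloor \le \log_{K/(K-1)} M < K\log M \le \lceil K\log M\rceil$, and because the leftmost and rightmost quantities are integers, the strict real inequality $\lfloor \log_{K/(K-1)} M \rfloor < \lceil K\log M\rceil$ upgrades to $\lfloor \log_{K/(K-1)} M \rfloor + 1 \le \lceil K\log M \rceil$, which is exactly the second inequality and completes the proof.
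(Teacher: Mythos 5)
Your proof is correct and follows essentially the same route as the paper: independent uniformly random permutations, the symmetry observation $\Pr[v=\min(t\cap\mathcal{K})]=1/K$, a union bound over the $K\binom{N}{K}$ pairs $(\mathcal{K},v)$, and the threshold $T>\log_{K/(K-1)}\bigl(K\binom{N}{K}\bigr)$. The only difference is that you spell out the second inequality (via $-\log(1-\tfrac1K)>\tfrac1K$ and the integer-gap argument), which the paper dismisses as an easy bound.
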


\label{Sec:RClaim}
\begin{proof}
    We use a probabilistic argument to prove this result for any given $N, K$. Choose tests $t_1, t_2,...,t_T$  independently and uniformly at random from the set of permutations of $V$. We look at the probability of the event $E$ that $\mathcal{T} = \{t_1,...,t_T\}$ is not feasible. Using \eqref{eqn2} we obtain
    \begin{equation} \label{eqn6}
        E = \bigcup_{\substack{v \in \mathcal{K} \subset V,\\ |\mathcal{K}|=K}} \{f_\mathcal{T}(\mathcal{K},v) = 0\}.
    \end{equation}
 From \eqref{Eqn:Ft}, we have that for any $\mathcal{K}, v$ in the union, 
 $$
 \{f_\mathcal{T}(\mathcal{K},v) = 0\} = \bigcap_{j\in [T]} A_j^c , \  A_j = \{v \le_{t_j} x, \ \forall x\in \mathcal{K}\} .
 $$
 Since each test $t_j$ is chosen uniformly at random, any element of $\mathcal{K}$ is equally likely to appear first among elements of $\mathcal{K}$. Thus, Pr$[A_j] = 1/K$. From the independence of $t_j$'s, we have
    \begin{equation*}
        \text{Pr}[f_\mathcal{T}(\mathcal{K},v) = 0] = \left(1-\frac{1}{K} \right)^T
    \end{equation*}
    \remove{
    This can be argued as follows. In a given test $t_j$, any element of $\mathcal{K}$ is equally likely to appear first among elements of $\mathcal{K}$. Thus, Pr$[A_j] = 1/K$ where $A_j = \{v \le_{t_j} x, \ \forall x\in \mathcal{K}\}$.\\
    We can write $\{f_\mathcal{T}(\mathcal{K},v) = 0\} = \bigcap_{j\in [T]} A_j^c $ and the result follows from the independence of $t_j$'s.\\}
    
    Applying the union bound to \eqref{eqn6} gives
    \begin{equation*}
        \text{Pr}[E] \le K{N \choose K} \left(1-\frac{1}{K} \right)^T \remove{<  K{N \choose K} e^{-\frac{T}{K}} }
    \end{equation*}
    If $T > \log_{\frac{K}{K-1}}  \left( K{N \choose K} \right)$, we obtain $P[E] < 1$. Thus, there must exist some feasible testing design of size $T =\left \lfloor \log_{\frac{K}{K-1}} \left( K{N \choose K} \right) \right \rfloor +1$, which can be easily upper bounded by $\left \lceil K\log \left( K{N \choose K} \right) \right \rceil$.
\end{proof}

\section{Proof of Claim~\ref{Claim:Proc}}
\label{Sec:AppClaim}
\begin{proof}   

We verify that $\mathcal{T}_2$ satisfies the condition in Theorem \ref{Thm:Equivalence} for parameters $N, K$. Consider some $\mathcal{K} \subset V$, with $|\mathcal{K}| = K$. Then, we have the following cases: 
\begin{enumerate}
    \item Case 1: $\exists i \in [n]:\ \mathcal{K} \subset A_i$, i.e., all $K$ items are in the same partition. By feasibility of $\mathcal{T}_1$, for any $v\in \mathcal{K}$, $\exists t \in \mathcal{T}_1 : v = \min (h_i \cap \mathcal{K})$, where $h_i$ is the items of $A_i$ arranged according to $t$. Let $f := t\circ t \in \mathcal{T}_2$. We have $h_i = (t\circ t) \cap A_i = f \cap A_i$, which gives:
    $$\Rightarrow v = \min (f \cap A_i \cap \mathcal{K}) = \min(f \cap \mathcal{K}).$$
    
    \item Case 2: $|\mathcal{K} \cap A_i| \le 1\ \forall i\in [n]$. This means there are distinct $i_1,i_2,...,i_K \in [n]$, and $v_1, v_2, ..., v_K \in V$ satisfying $v_j \in A_{i_j} \forall j \in [K]$ and $\mathcal{K} = \{v_1,v_2,...,v_K\}$. Let $\mathcal{I} = \{i_1,i_2,...,i_K\}$. Since $\mathcal{T}_1$ is feasible, $\forall j\in [K]$, $\exists t \in \mathcal{T}_1 : i_j=  \min (t\cap \mathcal{I})$. For $k \in [n]$, let $h_k$ denote the items of $A_k$ arranged according to $t$. Let $f = t\circ t \in \mathcal{T}_2$. Since $\mathcal{K} \subset \cup_{i\in \mathcal{I}}A_i$, we have 
    $$\min(f \cap \mathcal{K}) = \min(f \cap(\cup_{i\in \mathcal{I}}A_i) \cap \mathcal{K})$$
    $$ = \min(h_{i_j} \cap \mathcal{K})= v_j$$ 
    The second line above follows from the fact that $i_j$ appears before other elements of $\mathcal{I}$ in $t$, which means $h_{i_j}$ appears before $h_i$ in $f$ for all other $i\in \mathcal{I}$. This combined with the fact that $h_{i_j}$ includes all elements of $A_{i_j}$ implies $h_{i_j} \cap \mathcal{K} = \{v_j\}$, giving us the result.
    
    \item Case 3: $|\mathcal{K} \cap A_i|<K\ \forall i\in [n]$ and $\exists i \in [n]: |\mathcal{K} \cap A_i|>1$. This means there are distinct $A_{i_1},A_{i_2},..., A_{i_L}$ with $1<L\le K-1$, such that $0<|\mathcal{K} \cap A_{i_k}| \le K-1\ \forall k \in [L]$ and $\mathcal{K} \subset \cup _{k\in [L]} A_{i_k}$. For a given $v \in \mathcal{K}$, let $A_{i_j}$ be the partition that contains $v$. Let $\mathcal{K}_j = \mathcal{K} \cap A_{i_j}$, $\mathcal{I} = \{i_k: k\in [L]\}$. Note that $0<|\mathcal{K}_j| \le K-1$. By feasibility of $\mathcal{G}_1$ for parameters $n, K-1$, $\exists g, g' \in \mathcal{G}_1 : i_j = \min(g\cap \mathcal{I})$ and $v = \min(h_{i_j} \cap \mathcal{K}_j)$, where $h_{i_j}$ denotes the items of $A_{i_j}$ arranged according to $g'$. Let $f = g\circ g' \in \mathcal{T}_2$. Then, by the same reasoning as in Case 2, we have
    $$\min (f \cap \mathcal{K}) = \min (f\cap (\cup_{i\in \mathcal{I}}A_i)\cap \mathcal{K})$$
    $$ = \min (h_{i_j} \cap \mathcal{K}) = v$$
\end{enumerate}
In all cases, the condition in Theorem \ref{Thm:Equivalence} holds, which proves the feasibility of $\mathcal{T}_2$.
\end{proof}

\section{Proof of Claim~\ref{Claim:Loglog}}
\label{Sec:Loglog}
\begin{proof}
    We prove this by induction on $K$. First, we show that the claim holds for $K=3$. Firstly, $a_3 = 2^1-1 =1$. Using equation \eqref{Eqn:recursion} and the fact that $T(3,3) = 3$, we get for $r\in \mathbb{N}$ that $T(3^{2^r},3) \le 4r+3$.\\
    For arbitrary $N \ge 3$, let $r := \lceil \log _{2} \log _{3} (N) \rceil$, then $3^{2^r} \ge N$. By monotone property of $T(N,K)$, we have:
    $$T(N,3) \le T(3^{2^r},3) \le 4r+3 = O(\log \log N)$$
    Thus, the claim holds for $K = 3$.\\
    
    Now, suppose the claim holds for some $K\ge 3$. We will show that it also holds for $K+1$. By assumption, there exists $N_1 \in \mathbb{N}, C_1 \in \mathbb{R}$, s.t., $\forall N \ge N_1$:
    $$T(N,K) \le f(N) := C_1 (\log \log N)^{a_K}$$
    We assume $N_1 \ge K+1$ (w.l.o.g.). For arbitrary $r > 0$ consider $N = N_1^{2^r}$. Applying equation \eqref{recursion} repeatedly, we have:
    $$T(N,K+1)\le T(N_1,K+1)+\sum_{l = 0}^{r-1} T(N_1^{2^l},K)^2$$
    $$\le N_1 + r T(N, K)^2\le N_1 + rf(N)^2$$
    We have used the fact that $T(N,K) \le N$ and that $T(N,K)$ is monotone in its parameters. Now $r = \log _{2} \log _{N_1} (N) \le C_2 \log \log N$ for a constant $C_2 > 0$ (assuming $N$ sufficiently large). Thus, we have 
    $$T(N,K+1) \le N_1 + C_1 C_2 (\log \log N )^{2a_K+1}$$
    $$= N_1 +C_1C_2 (\log \log N )^{2^{K-1}-1} \le  C_3(\log \log N)^{a_{K+1}},$$
    for some $C_3 > 0$, assuming $ N \ge N_2$ for a suitable $N_2 \in \mathbb{N}$.\\
    For arbitrary $N\ge N_2$, let $r' := \lceil \log _{2} \log _{N_1} (N) \rceil$, then $N^2 \ge N' := N_1^{2^{r'}} \ge N$. By monotone property of $T(N,K)$, we have $\forall N \ge N_2$:
    $$T(N,K+1) \le T(N',K+1) \le C_3 (\log \log N')^{a_{K+1}}$$
    $$\le C_3(\log \log (N^2))^{a_{K+1}} = O((\log \log N)^{a_{K+1}})$$
    This completes the proof via induction.
\end{proof}

\remove{
\section{Proof of Claim~\ref{Thm:K3}}
\label{Sec:App3}
\begin{proof}
    First, we obtain a lower bound on $f(n)$. For any integer $r>1$,
    \begin{align*}
        f(n_r) -f(n_1) = r-1 =\sum_{i\in [r-1]} \log _{\alpha_i}\left( \frac{n_{i+1}}{n_i}\right)\\
        \ge \sum_{i\in [r-1]} \log _{f(n_r)}\left( \frac{n_{i+1}}{n_i}\right) = \log_{f(n_r)}\left( \frac{n_{r}}{n_1}\right)
    \end{align*}
    The inequality follows from $\alpha_i \le f(n_i) \le f(n_r)$.
    Rearranging and substituting $n_1 = 3$, we have:
    \begin{multline*}
        f(n_r) \ge \log _{f(n_r)}(n_r) + 3 -\log _{f(n_r)}(3) \ge \log _{f(n_r)}(n_r)\\
        \Rightarrow f(n_r) \log (f(n_r))\ge \log (n_r)
        \Rightarrow  \log (f(n_r)) \ge W_0(\log (n_r)).
    \end{multline*} 
    Here $W_0$ is the principal branch of the Lambert $W$ function. Using \cite[Theorem 2.7]{Hoorfar2008}, we have $W_0(x) \ge \log \left(\frac{x}{\log x} \right)$ for $x\ge e$. Thus $n_r\ge 16 > e^e \Rightarrow
    f(n_r) \ge \frac{\log n_r}{\log (\log n_r)}$.\\
    
    From here it is easy to show that $\forall n\ge 16$, $$f(n) \ge \frac{\log n}{\log (\log n)}$$
    
    We now prove $f(n) = O\left(\frac{\log n}{\log (\log n)} \right)$ using induction. Consider $C\in \mathbb{R}$ with $C>1$. Let $$h(n) := \frac{\log n}{\log (\log n)}$$ and $g(n):= C h(n)$. We know $f(n) \ge h(n)\ \forall n\ge 16$. Suppose 
    $f(n_r)<g(n_r)$ for some $r\in \mathbb{N}$. We want the following to hold: $f(n_r)+1 = f(\alpha_r n_r) < g(\alpha_r n_r)$.\\
    We will assume $n_r\ge 16$ and let $x:=n_r$. Since $g(n)$ is increasing in $n$ for $n\ge e^e$, we have $g(\alpha_r x)\ge g(xf(x)/2) \ge g(x h(x)/2)$. Thus
    \begin{multline*}
        f(x)+1 < g(\alpha_r x) \Leftarrow g(x)+1< g(x h(x)/2) \iff \\
        (g(x)+1)\log (\log (xh(x)/2)) - C\log (xh(x)/2) < 0.
    \end{multline*}
     By manipulating and substituting expressions, we get:
    \begin{multline*}
        (g(x)+1)\left( \log (\log x)+\log \left(1+\frac{\log (h(x)/2)}{\log x}\right) \right)\\
        - C\log (xh(x)/2) < 0\\
        \Leftarrow (g(x)+1)\left( \log (\log x)+\frac{\log (h(x)/2)}{\log x} \right)
        - C\log \left(\frac{xh(x)}{2}\right) = \\ \log (\log x) - \left(1 - \frac{1}{\log (\log x)} - \frac{1}{C\log x}\right) C\log \left(\frac{h(x)}{2}\right) < 0\\
        \Leftarrow \log (\log x) - (1 - \varepsilon)^2 C\log (\log x) < 0,
    \end{multline*}     
    where $\varepsilon \in (0,1)$. For the last step above, we assume $x \ge N_\varepsilon$, where $N_\varepsilon$ is such that $\forall n \ge N_\varepsilon,\ \frac{h(n)}{2} > (\log n)^{(1-\varepsilon)}$ and $\frac{1}{\log (\log n)} + \frac{1}{\log n} < \varepsilon$.\\
    
    Thus we have that if $n_r \ge M_\varepsilon := \max(N_\varepsilon, 16)$ and $C >  \frac{1}{(1-\varepsilon)^2}$, then 
    \begin{equation} \label{induct}
        f(n_r)< g(n_r) \Rightarrow f(n_{r+1}) < g(n_{r+1}).
    \end{equation}
    Let $r_1 := \min \{r\in \mathbb{N}:n_{r} \ge M_\varepsilon\}$. Choose $C > \max \left(\frac{f(n_{r_1})}{h(n_{r_1})}, \frac{1}{(1-\varepsilon)^2} \right)$. Then $f(n_{r_1}) < g(n_{r_1})$. Using \eqref{induct}, it follows by induction that $f(n_r)<g(n_r)\ \forall r\ge r_1$. It is easy to see then that $\forall n \ge n_{r_1}$:
    $$f(n)\le \frac{C \log n}{\log\log n}+1,$$
    which completes the proof.
\end{proof}
}

\end{appendices}
\remove{

}
\end{document}